\documentclass[10pt,twocolumn]{article}
\usepackage[margin=0.75in]{geometry}
\usepackage{amsmath,amssymb,amsthm,mathtools}
\usepackage{array,graphicx,booktabs,xcolor,hyperref,microtype}
\usepackage{tikz}
\usetikzlibrary{arrows.meta,positioning,backgrounds}
\usepackage[font=small,labelfont=bf]{caption}
\hypersetup{colorlinks=true,linkcolor=blue!50!black,citecolor=blue!50!black,
            urlcolor=blue!50!black,
            pdftitle={TBR: Transport-Based Rendering with Deposition Strokes for Inverse Graphics},
            pdfauthor={Tianqi Liu, Yushan Han, Hang Liu}}
\graphicspath{{figs/}}

\newtheorem{proposition}{Proposition}
\newtheorem{lemma}{Lemma}
\newtheorem{definition}{Definition}
\theoremstyle{remark}\newtheorem*{remark}{Remark}
\newcommand{\R}{\mathbb{R}}
\newcommand{\Om}{\Omega}
\newcommand{\dist}{\operatorname{dist}}
\newcommand{\sgm}{\operatorname{sigmoid}}
\newcommand{\RA}{\mathcal{R}_A}
\newcommand{\RU}{\mathcal{R}_U}
\newcommand{\RD}{\mathcal{R}_D}

\title{\vspace{-2em}\textbf{TBR: Transport-Based Rendering with Deposition Strokes for Inverse Graphics}}
\author{Tianqi Liu$^{1}$ \quad Yushan Han$^{2}$ \quad Hang Liu$^{3,\ast}$\\[4pt]
\normalsize $^{1}$Independent Researcher, Beijing, China\\
\normalsize $^{2}$Huaxin Design Institute, Hangzhou, China\\
\normalsize $^{3}$Communication University of China, Beijing, China\\[2pt]
\normalsize $^{\ast}$Corresponding author: 3118@cuc.edu.cn}
\date{}

\begin{document}\maketitle

\begin{abstract}
We present a stroke design in which strokes are transport-coupled: each
stroke deposits material of its own area and moves every earlier mark
without changing its area, so later strokes deform earlier ones. We then
solve the inverse problem under this design: given a target image, we
optimise an ordered program of such strokes whose replay approximates it,
with digital marbling as the motivating medium. The stroke is a capsule that
continuously joins circular drops to drawn deposits; its transport is exactly
area-preserving, with a closed-form inverse outside the deposit, and a
variant with the same inverse differs from line-source potential flow by
$8\%$ of the mean displacement. A replay
adjoint regenerates intermediate states instead of storing them and uses
$8.7\times$ less memory than checkpointed automatic differentiation; a fused
implementation fits a 2000-stroke program at $1024^2$ in about four minutes
on one GPU. On five marbled sheets the recovered programs are level with a published
stroke-based fitter as rasters, replay across a fourfold resolution range, and support edits in program order and palette space that
stay valid under transport.
\end{abstract}

\section{Introduction}
Turning an image into a program is useful when the program, rather than a
raster approximation, is the desired output: it can be replayed at a new
resolution, edited at the level of actions, or supplied to an authoring or
fabrication system. Stroke-based rendering provides this representation for
media in which a placed mark stays where it is put: later marks may cover it,
but do not change its geometry \cite{Hertzmann2003,Zou2021,Liu2021}. That
assumption is productive for drawing and painting, and it removes the
defining operation of marbling, the displacement of existing pigment.

\begin{figure*}[t]\centering
\includegraphics[width=\textwidth]{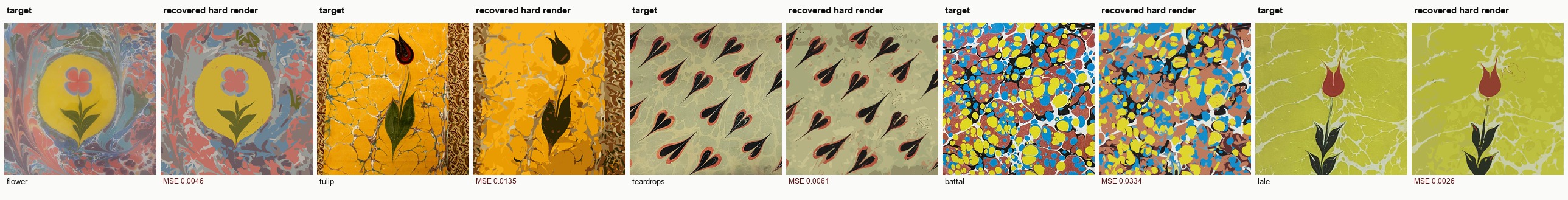}
\caption{\textbf{Target-to-program recovery.} Each pair shows a reference
image and the hard render obtained by replaying its recovered 2000-stroke
deposit-and-transport program. The five references cover both repeated motifs
and all-over patterns; no panel is selected by score.}\label{fig:recovery}
\end{figure*}

In paper marbling, pigment floats on a liquid surface; a new deposit creates
paint and displaces the pattern already present, so the location of an early
mark in the final image depends on every later stroke. Order matters, local
corrections propagate globally, and a program must be evaluated as a coupled
process. The inverse problem is to recover one program whose execution is
visually equivalent to a given sheet; the historical sequence that produced
the sheet is not the target.

Computational marbling provides forward simulators and composable closed-form
maps \cite{Acar2006,Lu2012,Jaffer2018,James2026}, authored by a user rather
than recovered from a target; its coupled primitives are either a disc (the
drop) or a drag that moves pigment without depositing it (the tine and
stylus). Differentiable simulation recovers continuous fields rather than a
discrete, ordered program of typed strokes \cite{McNamara2004,Hu2020}.
Stroke-based rendering optimises marks that stay put, and its differentiable
smudge operator \cite{Jiang2025} blends existing canvas colour along the brush
rather than transporting material. We develop deposition strokes with exactly
area-preserving exterior transport and a closed-form inverse, which admit
replay differentiation through long stroke sequences, together with the
inverse-graphics solution under that design: each stroke adds material of its
own area and moves the existing pattern without changing its area. The stylus strokes of the craft
are outside the family.

Fitting requires differentiation through the whole stroke chain, and storing
every intermediate state at this scale takes tens of gigabytes. Because the
transport map is invertible outside each deposit, a replay adjoint
regenerates the exterior states \cite{Gomez2017,Maclaurin2015,Vicini2021} and
records only the coordinates lost inside deposits, which a geometric
obstruction shows to be the single exact loss. The stroke itself is a capsule
whose length joins a circular drop to a drawn deposit continuously; its
transport is a translation in parallel-set area coordinates, which gives the
forward map and its inverse in closed form. Geometry, width and palette
assignment are optimised jointly through 2000 strokes.

We evaluate the recovered programs against controls that isolate transport: a
flat-colour floor, a matched ablation with transport disabled, a one-pass
closed-form correction, a second transport operator built from a published
primitive, three seeds, and synthetic targets with known generating
programs. A stroke-based renderer fits the same images but optimises marks
that stay put, so we run one at matched stroke count and report where it
wins. The output is an executable program that replays unchanged across a
$4\times$ resolution range and can be edited at the level of strokes and
stages (Fig.~\ref{fig:interv}).

\paragraph{Contributions.}
\begin{enumerate}\itemsep2pt
\item A transport-coupled deposition stroke for digital marbling: a capsule
that is exactly area-preserving off its footprint, has a closed-form inverse,
joins drops to drawn deposits continuously, and has a measured discrepancy
from line-source potential flow together with a variant that reduces it
(Sec.~\ref{sec:model}).
\item A replay adjoint for this recurrence, with the single exact obstruction
characterised and its record measured, using $8.7\times$ less memory than
checkpointed automatic differentiation (Sec.~\ref{sec:adjoint}).
\item A system that recovers executable 2000-stroke programs from single
images in minutes, with program-level edits, documented failure cases, and an
evaluation that separates image reconstruction from generating-stroke
correspondence (Secs.~\ref{sec:exp} and~\ref{sec:process}).
\end{enumerate}

The program is executable with respect to the computational stroke model:
``execution'' below means replay by its hard renderer. The operator is a geometric transport model rather than a simulation of a
bath; Sec.~\ref{sec:fidelity}
quantifies its discrepancy from a line-source potential-flow reference, itself
a transport model.

\section{Related work}
\paragraph{Forward models of marbling.}
Computational marbling has primarily been an authoring and simulation problem.
Fluid solvers \cite{Acar2006} reproduce the response of floating pigment to user input. Mathematical marbling replaces a full fluid solve with composable
maps \cite{Lu2012}; the classical drop map is closed-form and area-preserving.
Jaffer \cite{Jaffer2018} derives Oseen-flow maps for short stylus motion. Most
recently, James and James \cite{James2026} derive sharp analytical fluid
brushes from potential flow around a cylindrical tine, with reverse-drift
functions that encode tine insertion, motion and removal, compose in a
shader, and render progressively at any resolution. The tine, comb and stylus tools of these systems move pigment without depositing it, and deposition enters as the circular drop.
Inserting a tine of radius $a$ into an incompressible sheet displaces material
by $\sigma=\sqrt{\sigma'^2-a^2}$; that is the classical drop map, and it is
identically the $\ell=0$ limit of \eqref{eq:back}, so at the drop the two
constructions are the same map. The two constructions extend the drop map in different directions.
Mixwell adds tine motion and removal, which transport material without net
injection: the composed brush has no deposition set, is invertible
everywhere, and raises no obstruction of the kind
Lemma~\ref{lem:obstruction} identifies. We add extent instead --- the capsule
carries the drop continuously to an elongated deposit, keeping the closed-form
inverse --- and it is that continued deposition which makes the chain
non-injective on its footprint and gives the adjoint of
Sec.~\ref{sec:adjoint} something it must record; tine-style drag is the zero-injection
class we set aside (Sec.~\ref{sec:limits}). All of these works establish
efficient forward construction; none fits a program to an image. Our task
starts from the finished image and searches for the ordered construction
program.

\paragraph{Inverse design through physical dynamics.}
Adjoint fluid control \cite{McNamara2004} and differentiable programming systems \cite{Hu2020} show that desired outcomes can be converted into controls
through multi-step simulation, and recent work derives adjoints directly on
flow maps to differentiate long simulations accurately \cite{LiAdjoint2025}.
That work also regenerates state rather than storing it, but for a flow that
is reversible; the distinction here is that our map is deliberately not
reversible on the deposition set, and the contribution is to characterise and
measure that exception rather than to avoid it. These methods optimise continuous fields,
initial conditions, or controls. Our replay strategy instead specialises
state reconstruction to a stroke map that loses coordinates on its deposition set. Here the unknown
is a discrete-length program of typed pigment-creating strokes, and
the rendered appearance is carried by material deposited at different times.

\paragraph{Images as action programs.}
Stroke-based rendering spans hand-designed placement
\cite{Hertzmann2003}, sequential agents
\cite{Huang2019}, differentiable rasterisation \cite{Li2020,Zou2021}, and feed-forward stroke
prediction \cite{Liu2021}; the line begins with Haeberli's ordered marks
\cite{Haeberli1990} and Hertzmann's multi-size layering \cite{Hertzmann1998},
which uses a coarse-to-fine radius schedule similar to ours. A parallel line recovers
executable \emph{programs} from images --- hand-drawn diagrams
\cite{Ellis2018}, procedural models fitted by search \cite{Talton2011} --- where
the program's semantics is a static composition; here it is a coupled
dynamical process, so the contribution of slot $i$ depends on every later
slot. These methods composite marks \cite{Porter1984}; their
strength is direct image approximation rather than subsequent transport of
the accumulated canvas. Physically based paint systems model wet media during
forward interaction \cite{Curtis1997},
while robotic painters incorporate medium response through calibration or
feedback \cite{Schaldenbrand2023}.

A separate line recovers or synthesises a plausible \emph{process}: a drawing
order for line drawings \cite{Fu2011}, ordered translucent strokes from
time-lapse paintings \cite{Tan2015}, and ordered layered paths from one image
\cite{Ma2022}. These methods use different structural and learned priors;
they do not fit the deposition-and-transport model studied here. We optimise
stroke parameters in fixed ordered slots, so later transport affects the
image contribution of earlier slots; we do not search over discrete order.

The closest recent work is Jiang et al.~\cite{Jiang2025}, which fits
brushstrokes differentiably and includes smudging, so later strokes can alter
paint already placed. Its smudge renderer samples and blends existing canvas
colours along brush motion. Our construction instead uses explicit material
coordinate maps that preserve area outside deposits and admit exterior
inversion. The contribution is the capsule-deposition primitive and an
adjoint specialised to reconstructing its transport trajectories while
retaining coordinates lost by the deposition extension. Both approaches
produce structured painting representations; our replay and editing results
are evaluated within the specified marbling model.

\paragraph{Memory-efficient differentiation.}
Long reverse-mode chains can store all states, checkpoint them
\cite{Griewank2000}, or regenerate them by reversing the forward
step \cite{Maclaurin2015,Gomez2017}. Related replay strategies
are central to differentiable light transport
\cite{Vicini2021}, and flow-map adjoints obtain the
state-reconstruction benefits for long fluid simulations by reusing
flow maps \cite{LiAdjoint2025}; solver-level adjoints run an executed
block-implicit sweep backwards with a per-sweep replay buffer in place of
an autograd tape \cite{Shu2026}. Where reversal loses information,
reversible learning stores exactly the discarded bits \cite{Maclaurin2015};
our recurrence specialises that principle to pigment transport. The material a
stroke creates has no earlier state to regenerate; we construct that
exceptional set explicitly, record it, and measure its size. The soft coverage used during optimisation follows differentiable
rasterisation \cite{LiuSoftRas2019}; edge sampling \cite{LiEdge2018} and
reparameterisation \cite{Loubet2019} are the unbiased alternatives for
gradients across a visibility discontinuity; we use the annealed relaxation,
and every reported number is a hard render.

\section{Problem formulation}\label{sec:problem}
Let $\Om=[0,1]^2$ be the canvas and $I^\ast:\Om\to[0,1]^3$ the target image. A
\emph{program} is an ordered sequence $\Pi=(g_1,\dots,g_N)$ of strokes
$g_i=(a_i,d_i,r_i,c_i)$ with anchor $a_i\in\R^2$, offset $d_i\in\R^2$,
half-width $r_i>0$ and colour $c_i\in[0,1]^3$. Stroke $i$ has \emph{footprint}
\begin{equation}
K_i=\{x\in\R^2:\ \dist(x,[a_i,a_i+d_i])\le r_i\},
\end{equation}
the closed capsule of radius $r_i$ about its segment; $\ell_i=\lVert d_i\rVert=0$
is a disc, $\ell_i>0$ a drawn line, and stroke type is the continuous
parameter $\ell_i$.

\begin{table}[t]\centering\footnotesize\setlength{\tabcolsep}{4pt}
\begin{tabular}{@{}l@{\hskip 5pt}p{0.63\columnwidth}@{}}\toprule
symbol & meaning (first use)\\
\midrule
$\Pi=(g_1,\dots,g_N)$ & program: $N$ strokes in execution order\\
$g_i=(a_i,d_i,r_i,c_i)$ & anchor, offset, half-width, colour\\
$\ell_i=\lVert d_i\rVert$ & stroke length; $0$ is a drop, $>0$ a drawn line\\
$K_i$ & footprint: capsule of radius $r_i$ about the segment\\
$\Om=[0,1]^2$, $P$ & canvas and its pixel count\\
$\psi_i$ & backward material map of stroke $i$ (Eq.~\ref{eq:back})\\
$\Phi_i$ & exterior inverse of $\psi_i$ (Eq.~\ref{eq:fwd})\\
$\mathcal{R}_A,\mathcal{R}_U$ & renderer with transport applied / omitted\\
$\mathcal{R}_D$ & third renderer, $16$ area-matched drops (Sec.~\ref{sec:transfer})\\
$\sigma$ & distance from a point to the stroke's segment\\
$A_\ell,P_\ell$ & area, perimeter of the parallel set at $\sigma$ (Eq.~\ref{eq:steiner})\\
$u=A_\ell(\sigma)$ & area coordinate; $\Phi$ is a translation in $u$\\
$q$ & arclength along the iso-distance contour, normalised\\
$v$ & $q$ re-centred on the flank midpoint (Prop.~\ref{prop:tan})\\
$\alpha_i,\;\tau$ & coverage of stroke $i$; relaxation width (Eq.~\ref{eq:relax})\\
$\mathsf{P}$ & trainable palette, $8$ colours (Sec.~\ref{sec:param})\\
$s_i=(x_i,C_i,T_i)$ & per-pixel state: position, colour, transmittance\\
$\lambda^C_i,\lambda^T_i,\lambda^x_i$ & adjoints of $\mathcal{L}$ with respect to $s_i$ (Eq.~\ref{eq:adjstep})\\
$\lambda$ & weight on the length regulariser (Eq.~\ref{eq:inverse})\\
\bottomrule
\end{tabular}
\caption{Notation. Subscript $i$ indexes strokes throughout; the render
processes them from $N$ down to $1$, so state indices \emph{decrease} as
rendering proceeds.}\label{tab:notation}
\end{table}

A forward model $\mathcal{R}$ maps a program to an image. We consider models
of the following form. Each stroke $i$ induces a map $\psi_i:\R^2\to\R^2$
that traces pre-existing material from after stroke $i$ to before it.
Inside newly deposited material, it is a computational extension rather than
a material pre-image. For a pixel $x\in\Om$ set $x_N=x$ and, for
$i=N,\dots,1$,
\begin{equation}
\alpha_i=\alpha(x_i;g_i),\qquad
x_{i-1}=\psi_i(x_i),
\label{eq:chain}
\end{equation}
where $\alpha(\cdot\,;g_i)$ is the coverage of stroke $i$ ($\mathbf{1}_{K_i}$
for a hard render). The rendered colour is back-to-front compositing
\cite{Porter1984} along this chain,
\begin{equation}
\mathcal{R}(\Pi)(x)=\sum_{i=1}^{N}c_i\,\alpha_i\!\prod_{j>i}(1-\alpha_j)
\;+\;c_0\prod_{j=1}^{N}(1-\alpha_j),
\label{eq:render}
\end{equation}
with $c_0$ the background. Equation~\eqref{eq:render} is ordinary stroke
compositing; within this fixed stroke and coverage family, the renderers differ through $\psi_i$.
Evaluating coverage at a back-traced position is the semi-Lagrangian view of
advection: the coverage of stroke $i$ is evaluated where the pixel's material
\emph{was} when stroke $i$ acted, and that position depends on every later
stroke.

\begin{definition}\label{def:aware}
A forward model of the form \eqref{eq:chain}--\eqref{eq:render} is
\emph{inert-mark} if $\psi_i=\mathrm{id}$ for every $i$, and \emph{displacing}
otherwise. A displacing model is specified by the family of material maps
$\{\psi_i\}$; the definition is purely structural.
\end{definition}

The definition separates the structural property needed by the inverse
problem from the choice of a particular transport model. We restrict to
responses determined by the stroke alone: $\psi_i$ does not depend on the
current pigment loading. The exterior transport maps for insertions at the same segment commute
because the operator below translates an area coordinate. Coloured deposition
operations need not commute, since their coverage is ordered. Insertions at different
segments generally do not, because the second acts on material already moved
by the first.

The inverse problem is
\begin{equation}
\min_{\theta}\ \mathcal{L}\big(\mathcal{R}(\Pi(\theta)),I^\ast\big)
+\lambda\,\mathcal{P}(\Pi(\theta)),
\label{eq:inverse}
\end{equation}
over a smooth parameterisation $\theta\mapsto\Pi(\theta)$ that enforces the
feasible set by construction (Sec.~\ref{sec:param}), with data term
$\mathcal{L}$ and regulariser $\mathcal{P}$. Our task is to obtain a program attaining low reconstruction loss under the displacing renderer. The inert-mark case and a closed-form
pre-image correction serve as comparison methods; neither defines the
target task.

\paragraph{Required properties.} The construction relies on five properties. Four are
properties of a single map: (P1) a closed-form backward map; (P2) a computable
inverse (closed-form here); (P3) preservation of area outside newly deposited
material; and (P4) non-injectivity confined to the deposition set $K_i$. P1
evaluates the renderer, P2 regenerates states, P3 provides incompressible area
bookkeeping, and P4 identifies the trajectory samples to record; their number
is measured rather than bounded by deposited area, and computable iterative
inverses can also support replay (Sec.~\ref{sec:adjoint}). The fifth is a
property of the \emph{family}: (P5) stroke type varies continuously within
it, and every member satisfies P1--P4. Without P5 the choice between a drop
and a drawn deposit is combinatorial and has to be searched; with it, type is
one more differentiable coordinate. The operator below has all five, and
The supplement (Sec.~S10) explains what each one excludes.

\section{The capsule insertion operator}\label{sec:model}
A stroke injects new material and pushes the existing surface outward
without compressing or stretching it. We construct $\psi_i$ so that this
holds exactly, in three steps: coordinates wrapped around the stroke, an
area law in those coordinates, and a map that adds area. Fix a stroke with
segment of length $\ell$ and half-width $r$. Every point off the segment has
a distance $\sigma(x)=\dist(x,\text{segment})$ and lies on one parallel curve
$\{\sigma=\text{const}\}$, a stadium around the segment; let $q\in[0,1)$ be
its arclength along that curve as a fraction of the curve's perimeter. By
Steiner's formula the region enclosed by the parallel curve at distance
$\sigma$ has area and perimeter
\begin{equation}
A_\ell(\sigma)=2\ell\sigma+\pi\sigma^2,\qquad
P_\ell(\sigma)=2\ell+2\pi\sigma=A_\ell'(\sigma).
\label{eq:steiner}
\end{equation}
Moving outward by $d\sigma$ sweeps a band of area $P_\ell(\sigma)\,d\sigma$,
so $dA=P_\ell(\sigma)\,d\sigma\,dq$. The \emph{area coordinate}
$u=A_\ell(\sigma)$ absorbs that factor: $dA=du\,dq$, and a patch of the
plane has the same area whether it is measured in $(x,y)$ or in $(u,q)$.
The chart is defined off the segment; the supplement records its regularity.

\paragraph{Insertion.} The stroke inserts material of area $A_\ell(r)$. In
the area chart, pushing the surface outward is addition:
\begin{equation}
\Phi:(u,v)\longmapsto\big(u+A_\ell(r),\,v\big),\qquad u>0 ,
\label{eq:phi}
\end{equation}
where $v=q-\ell/(2P_\ell(\sigma))$ is arclength counted from the midpoint of
the capsule's flank rather than from a corner of it. In words, $u$ is the
area enclosed between the segment and the parallel curve through the point,
$v$ is the point's position along that curve as a fraction of its length,
and the stroke moves every point to the parallel curve that encloses
$A_\ell(r)$ more area, at the same position along it.

\begin{proposition}[Measure preservation]\label{prop:area}
$\Phi$ is measure-preserving and maps the plane minus the segment bijectively
onto the exterior of the footprint, $\{\sigma>r\}$.
\end{proposition}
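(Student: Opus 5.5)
The plan is to show that the area chart $x\mapsto(u,q)$, and hence $(u,v)$, is a measure-preserving bijection from $\R^2\setminus[a,a+d]$ onto $(0,\infty)\times[0,1)$ (circle in $q$). Proposition~\ref{prop:area} then reduces to a statement about translations in $u$.

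\emph{Step 1: the chart is a bijection.} Every point off the segment has a unique distance $\sigma(x)>0$. It lies on exactly one parallel curve $\{\sigma=\text{const}\}$, a $C^1$ closed convex stadium of perimeter $P_\ell(\sigma)$, so its normalised arclength $q$ is unique once a reference point is fixed. We fix the reference consistently across $\sigma$ by taking the point on the outward normal ray from a fixed point of the segment; for $\ell>0$ this is one endpoint, so that $q$ is measured from a corner. The map $\sigma\mapsto u=A_\ell(\sigma)=2\ell\sigma+\pi\sigma^2$ is strictly increasing from $(0,\infty)$ onto $(0,\infty)$. Conversely, given $(u,q)$ we recover $\sigma=A_\ell^{-1}(u)$ and then the point on that curve at fraction $q$. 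The shift $v=q-\ell/(2P_\ell(\sigma))$ is, for fixed $\sigma$, a rotation of the circle $[0,1)$, so $(u,v)$ is also a bijective chart.

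\emph{Step 2: Jacobian.} We use normal coordinates $(\sigma,s)$, where $s$ is unnormalised arclength. The map $(\sigma,s)\mapsto x$ moves along outward unit normals of the segment's nearest-point structure, and the area element is $d\sigma\,ds$. On the flanks this is a plain Cartesian product. On the end caps it is the polar form, once the arclength is written as $s=\sigma\theta$ plus a constant offset and the dependence of the origin of $s$ on $\sigma$ is handled. That dependence is the delicate point. Measuring $s$ from the reference normal ray, the cap point at angle $\theta$ has $s=\ell+\sigma\theta$, and the Jacobian of $(\sigma,\theta)\mapsto(\sigma,s)$ is triangular with diagonal $(1,\sigma)$. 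So $dA=\sigma\,d\sigma\,d\theta=d\sigma\,ds$. Substituting $s=P_\ell(\sigma)q$ gives a triangular Jacobian with diagonal $P_\ell(\sigma)$, and $u=A_\ell(\sigma)$ gives $du=P_\ell\,d\sigma$. Together $dA=du\,dq$, as the text claims. The passage from $q$ to $v$ has a triangular Jacobian with unit diagonal, since $\partial v/\partial q=1$, so $dA=du\,dv$. The segment-endpoint rays where the charts glue form a null set and do not affect the argument.

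\emph{Step 3: conclusion.} In $(u,v)$ coordinates $\Phi$ is the translation $(u,v)\mapsto(u+A_\ell(r),v)$. This is Lebesgue-measure-preserving and maps $(0,\infty)\times S^1$ bijectively onto $(A_\ell(r),\infty)\times S^1$. By Step 1 the latter set corresponds exactly to $\{\sigma>r\}$, because $A_\ell$ is increasing. Composing with the measure-preserving chart from Step 2 and its inverse gives the claim.

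The main obstacle is Step 2. The arclength origin and the flank/cap decomposition vary with $\sigma$, so one must check that the triangular structure survives, that the Jacobian is $1$ after normalisation, and that the parallel-curve parametrisation is measurable across the cap/flank junctions. I would handle this piecewise on the four regions (two flanks, two caps) and note that the junction rays have measure zero.
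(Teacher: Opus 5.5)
Your proof is correct and follows the paper's route: establish $dA=du\,dq=du\,dv$ in the area chart, observe that $\Phi$ is a translation in $u$ and so has unit Jacobian, and use the strict monotonicity of $A_\ell$ to identify the image with $\{\sigma>r\}$. Your piecewise flank/cap verification of $dA=d\sigma\,ds$, including the $\sigma$-dependent arclength offsets on the far flank and cap, spells out the band-sweeping step ($dA=P_\ell(\sigma)\,d\sigma\,dq$) that the paper states in one line.
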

Since $dA=du\,dv$ and $\Phi$ is a translation in $u$, its Jacobian
determinant is one and the statement follows; the supplement records the
argument. The exterior backward map is $\psi=\Phi^{-1}$: subtract the area,
$A_\ell(\sigma)=A_\ell(\sigma')-A_\ell(r)$, and solve the quadratic
\eqref{eq:steiner} for the distance. Each direction is written as the root
that is free of cancellation:
\begin{align}
\sigma&=\frac{C}{\ell+\sqrt{\ell^2+\pi C}},\qquad
C=(\sigma'-r)\big(2\ell+\pi(\sigma'+r)\big),\label{eq:back}\\
\sigma'&=\frac{B}{\ell+\sqrt{\ell^2+\pi B}},\qquad
B=2\ell(\sigma+r)+\pi(\sigma^2+r^2).\label{eq:fwd}
\end{align}
Here $C=A_\ell(\sigma')-A_\ell(r)$ and $B=A_\ell(\sigma)+A_\ell(r)$.
Equation~\eqref{eq:back} renders; \eqref{eq:fwd} is its exact inverse and is
what the adjoint of Sec.~\ref{sec:adjoint} iterates. The factored $C$ matters
where $\sigma'\to r$, at the pixels adjacent to a fresh stroke; the conjugate
form of \eqref{eq:fwd} matters for long thin strokes, $\ell/\pi\gg\sigma,r$.
At $\ell=0$, \eqref{eq:back} is the drop map $\sigma=\sqrt{\sigma'^2-r^2}$
of \cite{Lu2012}.

\paragraph{The tangential coordinate.} Equation~\eqref{eq:phi} fixes how far
a point moves outward. Which position along the parallel curve it keeps is
a separate choice, since a shift along the curve that depends only on $u$
preserves area as well; symmetry decides it.

\begin{proposition}[Centred arclength]\label{prop:tan}
Every map $(u,q)\mapsto(u+A_\ell(r),\,q+\delta(u))$ is measure-preserving;
among families that depend continuously on the inserted area and reduce to
the identity at zero area, the one that
preserves the centred coordinate $v$ is the only one that commutes with the
reflection of the capsule across its perpendicular bisector.
\end{proposition}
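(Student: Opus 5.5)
The argument splits into the measure-preservation claim and the uniqueness claim.

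\emph{Measure preservation.} Work in the area chart $(u,q)$, where $dA=du\,dq$ as established before Proposition~\ref{prop:area}. The map $(u,q)\mapsto(u+A_\ell(r),\,q+\delta(u))$, with $q$ taken modulo $1$, has Jacobian matrix
\[
\begin{pmatrix}1&0\\ \delta'(u)&1\end{pmatrix},
\]
which is lower triangular with determinant one. Pulling back through the chart therefore preserves Lebesgue measure. If $\delta$ is only continuous rather than differentiable, I argue fibrewise instead. On each circle $u=\text{const}$ the map is a rotation of the $q$-circle, which preserves $dq$, and the levels are translated in $u$. Fubini then gives preservation of $du\,dq$. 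The same observation shows the map is a bijection onto $\{\sigma>r\}$, exactly as in Proposition~\ref{prop:area}.

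\emph{Setting up uniqueness.} Let $\rho$ be the reflection across the perpendicular bisector of the segment. I first record how $\rho$ acts in the chart.
\begin{itemize}
\item $\rho$ is an isometry fixing the segment setwise, so it preserves $\sigma$ and hence $u$.
\item On each stadium it reverses orientation, so it acts on arclength as a reflection $q\mapsto c(u)-q \pmod 1$.
\item The fixed points of $\rho$ on the stadium are the two flank midpoints.
\end{itemize}
The centred coordinate is $v=q-\ell/(2P_\ell(\sigma))$, so the flank midpoint sits at $v=0$. In this coordinate the action simplifies to $\rho:(u,v)\mapsto(u,-v)$, with the other fixed point at $v=\tfrac12$.

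Write the candidate family in centred coordinates as $(u,v)\mapsto(u+A,\,v+\eta(u))$. Here $A=A_\ell(r)$ is the inserted area, and $\eta$ absorbs both $\delta$ and the difference between the centring offsets at $u$ and at $u+A$. Commuting with $\rho$ requires, for all $v$,
\[
-v-\eta(u)=-v+\eta(u)\pmod 1 ,
\]
that is, $2\eta(u)\in\mathbb{Z}$, so $\eta(u)\in\{0,\tfrac12\}$ for every $u$.

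\emph{Main step: excluding the half-turn.} The main obstacle is ruling out the branch $\eta\equiv\tfrac12$ (and mixtures of the two branches). This is where the continuity hypothesis enters.
\begin{itemize}
\item For fixed $u$, the value $\eta_A(u)$ must vary continuously in $A$ and equal $0$ at $A=0$.
\item It takes values in the discrete set $\{0,\tfrac12\}$ modulo $1$.
\item A continuous map from $[0,A]$ into a discrete set is constant, so $\eta_A(u)=0$ for all $u$ and $A$.
\end{itemize}
The resulting map is exactly the one that preserves $v$, which is \eqref{eq:phi}.

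Conversely, the $v$-preserving map commutes with $\rho$, because $\rho$ acts as $v\mapsto -v$ and leaves $u$ unchanged, and the map only shifts $u$. This gives existence alongside uniqueness.

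\emph{Convention check.} I still need to verify that the sign convention is consistent, namely that $\ell/(2P_\ell)$ is the arclength fraction from the chosen corner to the flank midpoint. This holds because the flank has length $\ell$ and the perimeter is $P_\ell(\sigma)$. The case $\ell=0$ is degenerate: the circle is symmetric under every reflection, so any constant $\eta$ commutes with $\rho$ in some sense. I would remark that $\rho$ is still fixed as the limiting bisector, and that the continuity argument goes through unchanged.
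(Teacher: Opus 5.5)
Your proof is correct and follows what is presumably the symmetry argument the paper defers to its supplement, which I cannot see. The shear Jacobian gives measure preservation; in the centred chart the bisector reflection is $v\mapsto -v$, so commutation forces $2\eta(u)\in\mathbb{Z}$, and continuity from the identity at zero area rules out the branch $\eta=\tfrac12$ (which is $\Phi$ followed by the half-turn of the capsule about its midpoint, a genuine symmetry, hence the need for that hypothesis). One small correction to your closing aside: a rotation $v\mapsto v+\eta$ commutes with a fixed reflection only if $\eta\in\{0,\tfrac12\}$, so at $\ell=0$ the point is not that any constant $\eta$ commutes with $\rho$ but that no reflection is distinguished, and once one is fixed your argument applies verbatim.
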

The supplement gives the symmetry argument; Table~\ref{tab:detj} measures
the error made by keeping the uncentred coordinate $q$ instead. The tangential motion is thus
fixed by symmetry rather than by a flow model: in potential flow, injection
moves material along the gradient of a harmonic potential, whose level sets
are not the parallel curves of the segment except at $\ell=0$.
Sec.~\ref{sec:fidelity} measures the difference.

\paragraph{Choice of skeleton.} The skeleton is a straight segment rather
than a curve for one reason. The construction needs the parallel curves of
the skeleton and the area they enclose in closed form, and the area law has
to hold at every distance, because back-traced coordinates travel
arbitrarily far from the canvas. Around a curved skeleton the parallel
curves run into one another on the concave side once $\sigma$ exceeds the
skeleton's \emph{reach}, and the Steiner count then over-counts the overlap:
for a circular arc of radius $R$ by several per cent at $\sigma=2R$. The law
holds at every distance only if the reach is infinite, which by Federer's
characterisation \cite{Federer1959} means a convex skeleton, and the only
compact convex curves are segments and points. A circular arc admits
closed-form projection and elementary arclength but is exact only below its
reach $R$; a cubic B\'ezier fails on the other two counts as well, its
nearest-point projection being the root of a quintic and its arclength an
elliptic integral. Curved skeletons, with the
approximation a finite reach implies, are left to future work;
Sec.~\ref{sec:transfers} uses the same condition for another medium.

\begin{table}[t]\centering\scriptsize\setlength{\tabcolsep}{3pt}
\begin{tabular}{l rr}\toprule
property (max over 200{,}000 exterior pts) & float32 & float64\\
\midrule
round trip $\Phi\circ\psi=\mathrm{id}$ & $1.9\times10^{-6}$ & $5.7\times10^{-15}$\\
area Jacobian $|\det J-1|$ & $1.1\times10^{-6}$ & $1.6\times10^{-15}$\\
drop limit, $\ell=10^{-6}$ & $2.7\times10^{-6}$ & $2.9\times10^{-6}$\\
midplane reflection, centred $v$ & $2.7\times10^{-6}$ & $5.9\times10^{-15}$\\
midplane reflection, raw $q$ & $4.6\times10^{-2}$ & $4.6\times10^{-2}$\\
\bottomrule
\end{tabular}
\caption{Numerical verification of Propositions~\ref{prop:area}
and~\ref{prop:tan} and of Equations~\eqref{eq:back}--\eqref{eq:fwd}, in canvas
units, on a capsule with $\ell=0.10$, $r=0.03$. The Jacobian is evaluated by
exact differentiation of the implemented map, so $\det J=1$ holds to machine
precision, as Proposition~\ref{prop:area} requires. The last row is the same
test with the arclength origin at a flank corner rather than the flank
midpoint: preserving raw $q$ breaks the capsule's mirror symmetry by four
orders of magnitude more than the discretisation, which is what
Proposition~\ref{prop:tan} rules out.}\label{tab:detj}
\end{table}

\paragraph{The chain.} With $\psi_i$ given by \eqref{eq:back} in stroke
$i$'s coordinates, extended to $K_i$ by clamping $C$ to zero --- which sets
$\sigma=0$ and preserves the centred arclength, so a point at signed
arclength $t$ from the flank midpoint, at distance $\sigma'$ on the flank,
goes to position $\ell t/(\ell+\pi\sigma')$ on the segment; this extension
is continuous across $\partial K_i$, where it agrees with the exterior
limit, differs from nearest-point projection, and is the one the
implementation computes --- \eqref{eq:chain}--\eqref{eq:render} define the
transport renderer $\RA$; setting $\psi_i=\mathrm{id}$ gives the inert-mark
model $\RU$. Both are evaluated by the same recurrence. The operator acts on
the free plane: there is no wall, so a bounded tray with a no-flux boundary
is not modelled (Sec.~\ref{sec:limits}).

\subsection{Fidelity of the surrogate}\label{sec:fidelity}
To quantify the discrepancy between the operator and the line-source
potential-flow reference, and how far
the evaluator of Sec.~\ref{sec:transfer} is from both, we use one functional.
For two backward maps of the same stroke, on the parallel curve
$\Gamma=\{\sigma'=2r\}$ one half-width outside the footprint after
insertion,
\begin{equation}
D(\psi_1,\psi_2)=\frac{\max_{x\in\Gamma}\lVert\psi_1(x)-\psi_2(x)\rVert}
{\operatorname{mean}_{x\in\Gamma}\lVert x-\psi_1(x)\rVert},
\label{eq:D}
\end{equation}
the largest disagreement between the two pre-images relative to the mean
displacement $\psi_1$ applies there. As reference we take $\psi_L$:
quasi-static injection of the same area $A_\ell(r)$ by a uniform line source
on the segment in 2-D potential flow, whose velocity is elementary
(logarithms and arctangents) and whose particle paths we integrate backward
for the injection time. All three maps insert the same area, so $D$ measures
foliation and tangential transport, not mass. Table~\ref{tab:foliation}: at
the median aspect ratio of the recovered strokes ($\ell/r\approx2$) the capsule operator differs from $\psi_L$ by $27\%$ under $D$, saturating near $30\%$ for long strokes and vanishing as $\ell\to0$, where it is exact.

\begin{table}[t]\centering\footnotesize\setlength{\tabcolsep}{5pt}
\begin{tabular}{r rrr}\toprule
$\ell/r$ & $D(\psi_A,\psi_L)$ & $D(\psi_D,\psi_L)$ & $D(\psi_A,\psi_D)$\\
\midrule
0.1 & 1.8\% & 0.2\% & 1.8\%\\
0.25 & 4.5\% & 0.6\% & 4.6\%\\
0.5 & 8.8\% & 1.3\% & 8.9\%\\
1.0 & 16.5\% & 3.5\% & 16.9\%\\
1.5 & 22.6\% & 6.2\% & 23.5\%\\
2.0 & 26.9\% & 9.2\% & 28.5\%\\
3.0 & 30.7\% & 14.6\% & 34.1\%\\
5.0 & 29.0\% & 20.6\% & 35.7\%\\
10.0 & 30.0\% & 22.6\% & 34.0\%\\
\bottomrule
\end{tabular}
\caption{Displacement discrepancy \eqref{eq:D} between the capsule operator
$\psi_A$, the sixteen-drop operator $\psi_D$ of Sec.~\ref{sec:transfer}, and
quasi-static injection of the same area by a uniform line source in 2-D
potential flow, $\psi_L$, as a function of aspect ratio. Line strokes of the
recovered programs have $\ell/r$ quartiles 1.0, 1.9, 3.6 (90th percentile
6.0). At the median aspect ratio $\psi_D$ is closer to the potential flow than
$\psi_A$ is. All three agree as $\ell\to0$: at $\ell=0$ the capsule is
identically the incompressible cylinder-insertion map
$\sigma=\sqrt{\sigma'^2-r^2}$, which is where every insertion operator must
start.}\label{tab:foliation}
\end{table}

\begin{figure*}[t]\centering
\noindent\makebox[0.25\textwidth]{\footnotesize Capsule ($\ell=3r$)}%
\makebox[0.25\textwidth]{\footnotesize Sixteen ordered drops}%
\makebox[0.25\textwidth]{\footnotesize One area-matched drop}%
\makebox[0.25\textwidth]{\footnotesize Line-source flow}\\[2pt]
\includegraphics[width=\textwidth,trim=0 0 0 26bp,clip]{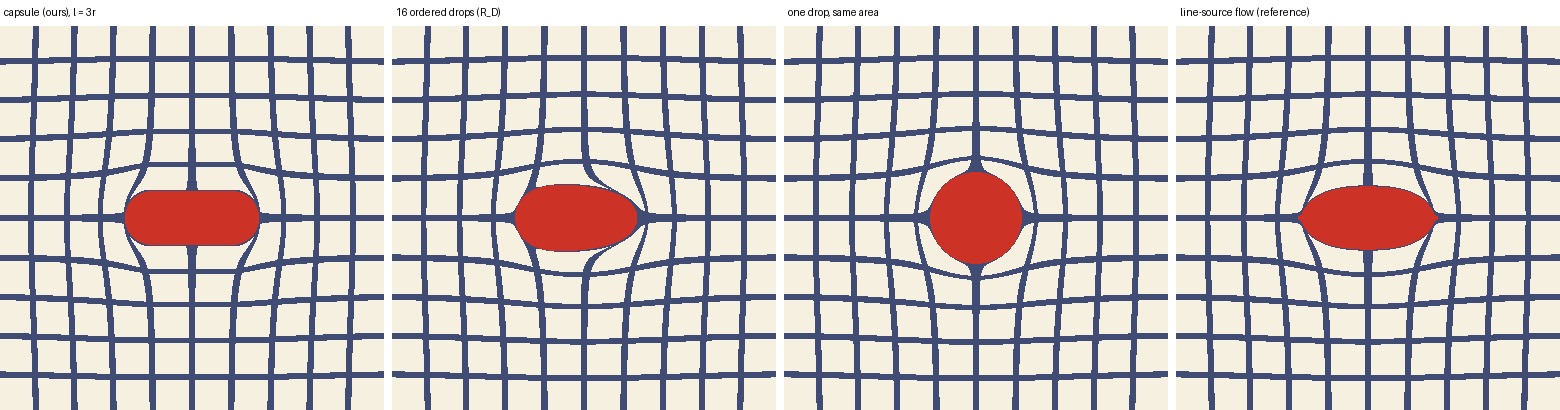}
\caption{\textbf{One stroke of the same inserted area under four
operators.} A grid carried by the sheet is pulled back through each backward
map; the deposit is red. Left to right: the capsule (ours, $\ell=3r$), one
closed-form map that inserts an elongated deposit; sixteen ordered drops
along the same segment, whose displacement is used by $\RD$ in
Sec.~\ref{sec:transfer}, with visible order effects;
a single drop of the same area \cite{Lu2012}, which cannot produce a drawn
deposit; and the line-source potential flow $\psi_L$, the reference
of Eq.~\eqref{eq:D}. The transfer experiment retains capsule coverage
while substituting the ordered-drop displacement.}\label{fig:prim}
\end{figure*}

\subsection{A variant closer to line-source potential flow}\label{sec:calibrated}
To mimic line-source potential flow more closely without giving up the
closed-form inverse, we keep the foliation and the coordinate $v$ and let the
increment in the area coordinate vary along a leaf. The map is then an
additive coupling layer in the area chart, in the sense of NICE
\cite{Dinh2015}: one coordinate is shifted by a function of the other, which
is left unchanged,
\begin{equation}
\Phi_g:\; u\longmapsto u+A_\ell(r)\,g(v),\qquad v\longmapsto v,
\label{eq:coupling}
\end{equation}
so the Jacobian is unit triangular and the inverse subtracts the same
quantity. With $g$ of unit mean the inserted area is still $A_\ell(r)$, and the
deposit becomes $\{u\le A_\ell(r)g(v)\}$ with an explicit boundary. Every
measure-preserving map that keeps $v$ has the form \eqref{eq:coupling}, so \eqref{eq:coupling} exhausts such variants, none of which can vary with
distance from the segment. We take
\begin{equation}
g(v)=1+\sum_{k=1}^{3}c_k\cos 4\pi k v,
\label{eq:profile}
\end{equation}
whose modes have zero mean and respect both reflections of
Proposition~\ref{prop:tan}. The first mode is $+1$ at the flank midpoints and
$-1$ at the tips, so a positive $c_1$ widens the flank and shortens the tips, the two ways in
which the reference deposit differs from the equal-area stadium.
Positivity of $g$ is checked exactly, since in $w=\cos4\pi v$ it is a cubic on
$[-1,1]$.

The coefficients are fitted once per aspect ratio to the forward line-source
map, on starting distances from $0.02r$ to $8r$ with held-out angles and
distances, and interpolated between the fitted ratios; the supplement
(Sec.~S9) lists them with the full comparison. At the median aspect ratio of
the recovered strokes the discrepancy \eqref{eq:D} falls from $27\%$ to $8\%$,
and it stays near $8\%$ across the aspect ratios that occur; the
deposit-boundary error falls from $0.17r$ to $0.03r$, at negligible extra cost per point. The residual is structural rather than a fitting limit:
the reference also moves material along the leaves, and a map that keeps $v$
cannot follow, which bounds \eqref{eq:D} below at $5.6\%$ at the median aspect
ratio; and because the increment cannot vary with distance, the far field is
slightly worse than under the uniform increment for short strokes. Fitting
programs under \eqref{eq:coupling} at the headline settings, with the
coefficients following each stroke's aspect ratio, retains image quality: over
five references and three seeds the hard-render error is $1.9\%$ lower in
geometric mean than the headline programs, inside the seed band, with
LPIPS-Alex level (supplement, Sec.~S9). Every other experiment below uses
$g\equiv1$.

\section{Inverse rendering}\label{sec:inverse}
\subsection{Parameterisation and relaxation}\label{sec:param}
Every constraint is enforced by construction, so \eqref{eq:inverse} is
unconstrained in $\theta$. Half-width is
$r=r_{\mathrm{lo}}+(r_{\mathrm{hi}}-r_{\mathrm{lo}})\,\sgm(\theta_r)$; offset is
$d=\ell_{\max}\tanh(\theta_d)$; anchor is
$a=a_{\mathrm{lo}}+(a_{\mathrm{hi}}-a_{\mathrm{lo}})\,\sgm(\theta_a)$
componentwise, with a window chosen either to keep the anchor on the canvas
or to contain the whole footprint; colour is
$c=\mathrm{softmax}(\theta_c/\tau_c)\,\mathsf{P}$ over a trainable palette
$\mathsf{P}\in[0,1]^{k\times3}$, itself a sigmoid of free parameters.
Table~\ref{tab:hyper} gives every constant.

The hard coverage $\mathbf{1}_{K_i}$ has no useful gradient. We relax it to
\begin{equation}
\alpha_\tau(x;g)=\sgm\!\Big(\frac{r-\sigma(x)}{\max(\tau r,\tau_{\min})}\Big),
\label{eq:relax}
\end{equation}
and anneal $\tau$ geometrically over the optimisation, so that early
iterations see a smooth objective and late ones a sharp one; because
$\tau_{\min}>0$, optimisation always uses a finite-width relaxation and never
the hard coverage itself. The relaxed coverage does not integrate to
$A_\ell(r)$, so during optimisation the model is only approximately
area-consistent; every reported number is computed at the hard render. Optimisation is by Adam with a cosine step-size
schedule; every stroke, the palette and every colour weight are optimised
jointly. Stroke order and count are fixed.

\paragraph{The deep-feature objective.} Three series --- the stroke-count
and stroke-type ablations of Sec.~\ref{sec:abl} and the paint-budget sweep of
Sec.~\ref{sec:process} --- were trained under a deep-feature data term rather
than pixel MSE, and are read within themselves. It is the unit-normalised
VGG16 feature distance: the reference and the render are normalised by the
ImageNet channel statistics and passed through the convolutional stack of
VGG16 with ImageNet weights; at \texttt{relu1\_2}, \texttt{relu2\_2},
\texttt{relu3\_3} and \texttt{relu4\_3} the per-location feature vectors of
both images are normalised to unit length across channels, their squared
difference is summed over channels and averaged over locations, and the four
layer terms are summed. This is LPIPS-shaped but carries no learned weights,
and is treated as a perceptual proxy in its own units.

\begin{table*}[t]\centering\footnotesize\setlength{\tabcolsep}{5pt}
\begin{tabular}{p{0.22\textwidth}p{0.72\textwidth}}\toprule
strokes $N$ / parameters & 2000 / 26{,}024 (13 per stroke, palette $8\times3$)\\
half-width window & $r\in[0.6\,r^0,\,1.7\,r^0]$; $r^0$ geometric from $0.120$ to $0.005$ over the program, so $r\in[0.003,0.204]$\\
offset & $d=\ell_{\max}\tanh\theta_d$, $\ell_{\max}=0.30$\\
coverage relaxation & $\tau$: $0.35\to0.02$ geometric; floor $\tau_{\min}=0.7/\text{res}$\\
colour & softmax over an $8$-colour palette, temperature $1.0\to0.05$\\
regulariser & $\lambda\sum_i\ell_i$, $\lambda=3\times10^{-4}$\\
optimiser & Adam; steps $0.012$ (anchor), $0.018$ (width), $0.015$ (offset),
$0.030$ (colour), $0.010$ (palette), with cosine decay; 240 iterations at $1024^2$\\
objective scale & data term scaled by $\mathcal{L}_{\text{VGG}}(0)/\mathcal{L}_{\text{MSE}}(0)$ at initialisation, per run\\
evaluation & hard render $\mathbf{1}_{K}$ at $1024^2$, reference implementation, seed 0\\
\bottomrule
\end{tabular}
\caption{Optimisation settings. Constants were selected on flower and battal, together with one further sheet
that is not part of the evaluation corpus, and then held fixed for the
remaining references and matched controls.}\label{tab:hyper}
\end{table*}

\paragraph{Objective and metrics.} The data term is pixel mean-squared error,
computed on sRGB values as stored, without a transfer-function decode: the
reference is read as $8$-bit sRGB and divided by $255$, and the palette lives
in the same space. Linear light would reweight the error towards
highlights; every arm and every comparator is scored under the same convention. We use LPIPS-Alex \cite{Zhang2018} only as a secondary perceptual
measurement. Optimisation uses the relaxed coverage of
Eq.~\eqref{eq:relax}; every reported image error uses an exported program
with hard coverage. The relaxation used for gradients therefore does not enter any reported score.

\subsection{The adjoint of the chain}\label{sec:adjoint}
Optimising a program of $N$ coupled strokes requires a gradient through all
$N$ of them, and keeping every intermediate state for the backward pass is infeasible at the chain lengths required here. Two properties of the operator replace it:
it is invertible in closed form, so states can be \emph{regenerated} instead
of stored, and its non-invertibility is confined to an explicitly characterisable set, so the exceptions can be \emph{recorded}. Figure~\ref{fig:adjoint} shows the
construction; Table~\ref{tab:adjoint} gives its cost.

\begin{figure}[t]\centering
\begin{tikzpicture}[font=\footnotesize,>={Latex[length=1.6mm]},
  st/.style={draw,rounded corners=1pt,minimum height=4.2mm,minimum width=7.5mm,
             inner sep=1pt,fill=black!3},
  gh/.style={draw,densely dotted,rounded corners=1pt,minimum height=4.2mm,
             minimum width=7.5mm,inner sep=1pt,text=black!45}]
% ---- forward lane
\node[anchor=base west,font=\footnotesize\bfseries] at (-0.15,1.30) {render (forward)};
\node[st] (fN) at (0.35,0.75) {$s_N$};
\node[gh] (fa) at (1.55,0.75) {$s_{i}$};
\node[gh] (fb) at (2.75,0.75) {$s_{i-1}$};
\node[st] (f0) at (4.15,0.75) {$s_0$};
\draw[->] (fN)--(fa); \draw[->,densely dotted] (fa)--(fb); \draw[->,densely dotted] (fb)--(f0);
\node[anchor=west,text=black!45,font=\scriptsize] at (4.62,0.75) {kept};
\node[anchor=west,text=black!45,font=\scriptsize] at (1.15,0.28)
     {discarded as produced};
% ---- backward lane
\node[anchor=base west,font=\footnotesize\bfseries] at (-0.15,-0.62) {adjoint (backward)};
\node[st] (b0) at (0.35,-1.15) {$s_0$};
\node[st] (ba) at (1.55,-1.15) {$s_{i-1}$};
\node[st] (bb) at (2.75,-1.15) {$s_{i}$};
\node[st] (bN) at (4.15,-1.15) {$s_N$};
\draw[->] (b0)--(ba); \draw[->] (ba)-- node[above,font=\scriptsize] {$\Phi_i$} (bb);
\draw[->] (bb)--(bN);
\node[anchor=west,font=\scriptsize] at (4.62,-1.15) {$O(P)$};
% ---- the obstruction
\node[draw,rounded corners=1pt,fill=red!6,draw=red!55,align=center,
      font=\scriptsize,inner sep=2.5pt] (tab) at (2.30,-2.35)
     {$\psi_i$ loses coordinates inside deposit $i$\\[1pt]
      side table supplies $(i,x_i)$ for $x_i\in K_i$};
\draw[->,red!65] (tab.north) -- (2.30,-1.38);
\end{tikzpicture}
\caption{\textbf{Regenerate, and trace the exceptions.} The render produces
per-pixel states $s_i=(x_i,C_i,T_i)$ in the order $N\to0$; reverse-mode
differentiation needs them in the order $0\to N$. Exterior positions are
reconstructed with $\Phi_i$, and the remaining state with
\eqref{eq:rT}--\eqref{eq:rC}. Positions lost inside deposits are read from a
side table (Lemma~\ref{lem:obstruction}). The diagram shows the $O(P)$ active
state; total memory also includes the side table and periodic position
checkpoints.}\label{fig:adjoint}

\end{figure}

Write the per-pixel state after stroke $i$ has been processed as
$s_{i-1}=(x_{i-1},C_{i-1},T_{i-1})$ --- position, accumulated colour,
transmittance; indices decrease as processing proceeds, matching
\eqref{eq:chain} --- so that \eqref{eq:chain}--\eqref{eq:render} read
\begin{equation}
\begin{aligned}
C_{i-1}&=C_i+T_i\alpha_i c_i, & T_{i-1}&=T_i(1-\alpha_i),\\
x_{i-1}&=\psi_i(x_i),&&
\end{aligned}
\label{eq:state}
\end{equation}
with $s_N=(x,0,1)$ and $\mathcal{R}(\Pi)(x)=C_0+T_0c_0$. Reverse-mode
differentiation of $\mathcal{L}$ through \eqref{eq:state} requires the states
$s_i$ in the order $i=0,\dots,N$; a forward pass produces them in the opposite
order and retains only $s_0$. Storing all of them costs $O(NP)$ for $P$
pixels --- about $50$\,GB for $2000$ strokes at $1024^2$, an analytic estimate for a configuration that was not run --- and checkpointing \cite{Griewank2000} trades it for
recomputation. Instead we \emph{regenerate} them. Because $\psi_i$ is
invertible when restricted to the exterior (Proposition~\ref{prop:area}, with inverse
\eqref{eq:fwd}),
\begin{align}
x_i&=\Phi_i(x_{i-1}),\label{eq:rpos}\\
\log T_i&=\log T_{i-1}-\log(1-\alpha_i),\label{eq:rT}\\
C_i&=C_{i-1}-T_i\,\alpha_i\,c_i,\label{eq:rC}
\end{align}
which is \eqref{eq:state} solved for $s_i$. The backward pass steps
\eqref{eq:rpos}--\eqref{eq:rC} alongside the vector-Jacobian products of
\eqref{eq:state}, holding $O(P)$ state.

\begin{lemma}[Support of the replay record]\label{lem:obstruction}
Let $K_i$ be the closed footprint. The exterior map $\Phi_i$ is a bijection
from $\R^2\setminus\{\mathrm{segment}_i\}$ onto $\R^2\setminus K_i$.
Consequently, \eqref{eq:rpos} reconstructs $x_i$ uniquely whenever
$x_i\notin K_i$. For $x_i\in K_i$, the extended $\psi_i$ maps to the
segment, outside the domain of $\Phi_i$; exterior inversion cannot reconstruct
such samples. The exceptional set is the closed footprint $K_i$, which the implementation
records (below).
\end{lemma}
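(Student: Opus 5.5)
The plan is to reduce everything to the area chart of Sec.~\ref{sec:model}, where the claims become statements about a translation of a half-line. First, the map $x\mapsto(\sigma(x),v(x))$ is a bijection from $\R^2\setminus\{\mathrm{segment}_i\}$ onto $(0,\infty)\times[0,1)$.

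\begin{itemize}
\item Every point off the segment has a unique distance $\sigma>0$. It lies on exactly one parallel stadium $\{\sigma=\text{const}\}$, a simple closed curve of perimeter $P_\ell(\sigma)$.
\item On that curve its normalised arclength $q$, and hence the centred coordinate $v$, is unique modulo one.
\item Conversely, each pair $(\sigma,v)$ determines exactly one point.
\item Since $A_\ell$ is strictly increasing from $A_\ell(0)=0$ to $\infty$ by \eqref{eq:steiner}, the map $\sigma\mapsto u=A_\ell(\sigma)$ is a bijection $(0,\infty)\to(0,\infty)$.
\end{itemize}

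So the plane minus the segment is identified with $\{u>0\}\times[0,1)$, and $\{\sigma>r\}$ is identified with $\{u>A_\ell(r)\}\times[0,1)$.

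In this chart, $\Phi_i$ is $(u,v)\mapsto(u+A_\ell(r),v)$ by \eqref{eq:phi}. It maps $(0,\infty)$ bijectively onto $(A_\ell(r),\infty)$, with inverse the subtraction of $A_\ell(r)$. This gives the bijection onto $\R^2\setminus K_i$, because $K_i=\{\sigma\le r\}$ by definition and its complement is exactly $\{\sigma>r\}$. This is Proposition~\ref{prop:area}, which I would simply invoke.

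Next, the reconstruction claim. Suppose $x_i\notin K_i$. Then, by the chain, $x_{i-1}=\psi_i(x_i)=\Phi_i^{-1}(x_i)$, which lies in the domain of $\Phi_i$. Applying $\Phi_i$ returns $x_i$, and injectivity of $\Phi_i$ makes this preimage unique. Hence \eqref{eq:rpos} recovers $x_i$ exactly, with no stored data.

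For $x_i\in K_i$, I would use the stated extension: clamping $C$ to zero in \eqref{eq:back} gives $\sigma=0$. So $\psi_i(x_i)$ lies on the segment, which is outside the domain of $\Phi_i$, and \eqref{eq:rpos} is undefined there. To show that no exterior inversion can work, I would add the non-injectivity argument.

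\begin{itemize}
\item The extension sends the two-dimensional set $K_i$ onto the one-dimensional segment.
\item Explicitly, the point at signed arclength $t$ and distance $\sigma'\le r$ goes to $\ell t/(\ell+\pi\sigma')$ along the segment.
\item Fixing the image value and varying $\sigma'$ therefore yields a curve of preimages, so whole curves of distinct $x_i$ share one $x_{i-1}$.
\item Equal images force equal outputs of any left inverse, so no left inverse exists on $K_i$.
\end{itemize}

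Finally, both conditions are sharp, so the exceptional set is exactly $K_i$. Points on $\partial K_i$, where $\sigma'=r$, belong to it: they map to the segment through the clamped $C=0$. Every point with $\sigma'>r$ is reconstructible.

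The main technical point is the first step, namely that $(u,v)$ is a genuine global chart off the segment. This needs uniqueness of the distance-level stadium through each point, which holds because the distance function of a convex set has convex, nested sublevel sets, together with the correct handling of the flank/cap junctions where $q$ is defined piecewise. I would cite the supplement's regularity remark for continuity and only need bijectivity here, which follows from the level curves being simple closed curves parametrised by arclength.
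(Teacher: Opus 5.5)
Your proposal is correct and takes essentially the same approach as the paper's proof: exterior reconstruction follows from Proposition~\ref{prop:area} as $x_i=\Phi_i(\psi_i(x_i))$, and the footprint is the exception because the clamped extension ($C=0$, hence $\sigma=0$) sends $K_i$ onto the segment, outside the domain of $\Phi_i$, collapsing a two-dimensional set non-injectively. You add detail the paper leaves implicit, namely the global $(u,v)$ chart and the check that boundary points with $\sigma'=r$ already map to the segment (the paper instead remarks that $\partial K_i$ has zero planar area); the one slip is that $\ell t/(\ell+\pi\sigma')$ holds only for points on the flanks, but that region is all your non-injectivity argument needs.
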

\begin{proof}
For $x_i\notin K_i$, Proposition~\ref{prop:area} gives
$x_i=\Phi_i(\psi_i(x_i))$. For $x_i\in K_i$, the clamped extension sets
$\sigma=0$, so $\psi_i(x_i)$ lies on the segment and \eqref{eq:rpos} is
undefined. The extension collapses two-dimensional regions to that segment
and is not injective. The boundary $\partial K_i$ has zero planar area.
\end{proof}

The ideal map is exactly invertible on the exterior. The implementation
regularises the root \eqref{eq:back}: it floors $C$ at $(\epsilon r_i)^2$ with
$\epsilon=10^{-3}$, which sends the footprint to a distance of at most
$\epsilon r_i/\sqrt{\pi}$ from the segment while retaining the direction, and it
records every sample at which the floor binds, $C<(\epsilon r_i)^2$, which is
the closed footprint together with an exterior sliver of width below
$10^{-6}r_i$. Boundary ties, on which collapsed trajectories concentrate, are
therefore stored rather than regenerated, and \eqref{eq:rpos} is applied only
to samples the floor did not touch, where it is exact up to float32 rounding;
the replay error of Table~\ref{tab:adjoint} measures the implemented scheme.

\paragraph{Relation to reversible backpropagation.} Existing approaches that avoid storing a long trajectory share the assumption that the forward step is invertible.
Reversible architectures are built so that it can \cite{Gomez2017};
reversible learning integrates the dynamics backwards \cite{Maclaurin2015}; path replay re-traces a light path from its
seed \cite{Vicini2021}. A painting operator breaks that assumption
differently: it \emph{creates} material, so the map is not injective on the deposition set and no integration scheme can recover the lost state. The applicable principle is that of reversible learning \cite{Maclaurin2015}, to store exactly what cannot be regenerated and nothing else, specialised here to a set that is identifiable during
the forward pass and whose size we measure below rather than bound. We state
it that way in Sec.~\ref{sec:transfers} because it is the part of this paper
most likely to be useful outside marbling.

The lemma gives the record this replay scheme needs: we store $(i,x_{i,p})$ for every pixel
trajectory $p$ that the floor touches, and regenerate every other position
from \eqref{eq:rpos}.

\begin{remark}[Size of the side table]
Let $S=\sum_{i,p}\mathbf{1}[x_{i,p}\in K_i]$ be the
implemented record count. A unit-density estimate would be
$P\sum_i|K_i|/|\Om|$. It is not a general bound: collapsed trajectories can
travel together and be counted again at subsequent insertions, and the pixel
grid introduces sampling error. At $N=2000$ we measured $9.9$--$14.0$
records per pixel at $128^2$, $256^2$ and $1024^2$ to the reported precision
($0.12$--$0.18$\,GB at $1024^2$). The corresponding unit-density area
estimates are $4.4$ for the initial program and $5.5$--$6.2$ for recovered
programs. These are measurements on the tested programs, not bounds.
\end{remark}

Soft coverage makes transmittance reconstruction ill-conditioned when
$\alpha_i$ approaches one, and a long float32 composition accumulates
position error even though each map preserves area. Both are conditioning problems, and both are handled by choosing where to
spend memory: transmittance is carried in logarithms with $\alpha_i$ clamped at $1-10^{-7}$, and positions are checkpointed periodically at an interval chosen from the sweep of Table~\ref{tab:adjoint}; the gradient error that remains at each interval is measured.
With checkpoint interval $k$ and the record count $S$ defined above,
\[
M=O\!\left(P+P\lceil N/k\rceil+S\right).
\]
The constants depend on the implementation; $S$ is the measured record
count of the remark above.

\paragraph{Structure of the gradient.} Write $\lambda^C_{i-1}$,
$\lambda^T_{i-1}$ and $\lambda^x_{i-1}$ for the adjoints of $\mathcal{L}$
with respect to the state $s_{i-1}$ that step $i$ produces. Differentiating
\eqref{eq:state} gives the adjoint step
\begin{equation}
\begin{aligned}
\lambda^C_i&=\lambda^C_{i-1},\qquad
\lambda^T_i=(1-\alpha_i)\,\lambda^T_{i-1}+\alpha_i\,\lambda^C_{i-1}\!\cdot c_i,\\
\lambda^x_i&=(D\psi_i)^{\!\top}\lambda^x_{i-1}+\mu_i\,\partial_x\alpha_i,\\
\mu_i&=T_i\big(\lambda^C_{i-1}\!\cdot c_i-\lambda^T_{i-1}\big),
\end{aligned}
\label{eq:adjstep}
\end{equation}
and the parameter gradients
\begin{equation}
\frac{\partial\mathcal{L}}{\partial c_i}=\sum_x\lambda^C_{i-1}\,T_i\,\alpha_i,\qquad
\frac{\partial\mathcal{L}}{\partial\theta_i}=\sum_x\Big[\mu_i\,\frac{\partial\alpha_i}{\partial\theta_i}
+\lambda^x_{i-1}\!\cdot\frac{\partial\psi_i}{\partial\theta_i}\Big]
\label{eq:adjparam}
\end{equation}
for the geometric parameters $\theta_i\in\{a_i,d_i,r_i\}$. Three regions of
the canvas contribute differently (Fig.~\ref{fig:holegrad}); with the relaxed
coverage \eqref{eq:relax} they are regions where one term dominates, not exact
supports. Inside the footprint --- the \emph{hole} --- $\alpha_i$ is close to
one: the stroke's colour receives most of its gradient there, weighted by the
transmittance $T_i$ left by the strokes laid after it, so paint that is later
covered receives little colour gradient. In a band of width
$\max(\tau r_i,\tau_{\min})$ around $\partial K_i$ --- the \emph{rim} ---
$\partial_\theta\alpha_i$ is largest: this is where the shape of the deposit is
adjusted, through the sensitivity $\mu_i$ of the composite to coverage. Outside
--- the \emph{rest} --- $\alpha_i$ is small and the transport term dominates:
$\partial_\theta\psi_i$ is non-zero on the whole exterior and falls off with
distance from the footprint, so changing the position, width or length of stroke $i$ changes where
every earlier deposit is sampled, and $\lambda^x_{i-1}$ carries the sensitivity
of all that earlier paint. The term $(D\psi_i)^{\!\top}\lambda^x_{i-1}$ pulls that accumulated
sensitivity back through stroke $i$'s map into the frame in which strokes
$i+1,\dots,N$ act, so each later stroke's transport term receives gradient contributions from pigment deposited before stroke $i$. Inside the hole the extended $\psi_i$ collapses to the segment
and its derivative is rank-deficient, but there $T_{i-1}=T_i(1-\alpha_i)$ is
small, so the colour gradient of earlier paint is attenuated at the same pixels where position has little effect on the rendered image. Replay makes
every quantity in \eqref{eq:adjstep}--\eqref{eq:adjparam} available at step
$i$ from $s_{i-1}$ and the side table; the capsule supplies $\psi_i$, its
inverse and their derivatives in closed form, so no per-sample iterative
solve is needed (an iterative alternative is compared in the supplement).

\begin{figure}[t]\centering
\begin{tikzpicture}[font=\scriptsize,>={Latex[length=1.4mm]}]
% rest: transport arrows on the exterior, decaying outward
\foreach \ang in {15,45,75,105,135,165,195,225,255,285,315,345}{
  \pgfmathsetmacro{\cx}{ifthenelse(cos(\ang)>0.05,2.35,ifthenelse(cos(\ang)<-0.05,1.15,1.75))}
  \draw[->,black!55,thick] ({\cx+0.95*cos(\ang)},{0.6+0.95*sin(\ang)}) -- ({\cx+1.45*cos(\ang)},{0.6+1.45*sin(\ang)});}
% rim (relaxed coverage band, width tau)
\fill[orange!30] (1.15,-0.2) -- (2.35,-0.2) arc (-90:90:0.8) -- (1.15,1.4) arc (90:270:0.8) -- cycle;
% hole (footprint)
\fill[red!45] (1.15,0.05) -- (2.35,0.05) arc (-90:90:0.55) -- (1.15,1.15) arc (90:270:0.55) -- cycle;
\draw[red!70!black] (1.15,0.05) -- (2.35,0.05) arc (-90:90:0.55) -- (1.15,1.15) arc (90:270:0.55) -- cycle;
\draw[dashed,orange!80!black] (1.15,-0.2) -- (2.35,-0.2) arc (-90:90:0.8) -- (1.15,1.4) arc (90:270:0.8) -- cycle;
\node at (1.75,0.6) {hole};
\node[anchor=west] at (3.55,0.95) {rim, width $\max(\tau r,\tau_{\min})$};
\node[anchor=west] at (3.55,0.25) {rest};
\draw[orange!80!black] (3.5,0.95) -- (3.0,1.28);
\draw[black!55] (3.5,0.25) -- (3.35,-0.35);
\end{tikzpicture}\\[3pt]
\begin{tabular}{@{}l@{\hskip 6pt}l@{}}\scriptsize
\textbf{hole} $\alpha_i\to1$ & \scriptsize colour gradient $\lambda^C_{i-1}T_i\alpha_i$, weighted by $T_i$\\
\scriptsize\textbf{rim}, relaxation band & \scriptsize shape gradient $\mu_i\,\partial_\theta\alpha_i$\\
\scriptsize\textbf{rest} $\alpha_i=0$ & \scriptsize transport gradient $\lambda^x_{i-1}\!\cdot\partial_\theta\psi_i$, falling off with distance\\
\end{tabular}
\caption{\textbf{Where one stroke's gradient comes from.} Inside its
footprint the stroke receives a colour gradient, weighted by the
transmittance of the paint laid above it; on the relaxed rim it receives a
shape gradient through the coverage; on the whole exterior it receives a
transport gradient through $\psi_i$, because resizing or moving it changes
where every earlier deposit is sampled. Paint beneath the hole is attenuated ($T$ small) and receives little of either gradient
there; the spatial sensitivity of the earlier paint that remains visible is
pulled back through $(D\psi_i)^{\!\top}$ to the strokes laid after it.}\label{fig:holegrad}
\end{figure}

\begin{table}[t]\centering\footnotesize\setlength{\tabcolsep}{3pt}
\begin{tabular}{l rrr}\toprule
backward pass & peak (GB) & step (s) & vs.\ AD\\
\midrule
stored trajectory & $\approx50$ & --- & ---\\
autograd, checkpoints every 20 & 6.49 & 23.00 & 1.0$\times$\\
replay adjoint (PyTorch, ckpt 256) & 0.75 & 23.89 & 8.7$\times$\\
\textbf{replay adjoint, fused (Triton)} & 1.72 & 0.52 & 3.8$\times$\\
\bottomrule
\end{tabular}

\vspace{5pt}
\begin{tabular}{l rr rr}\toprule
checkpoint & peak (GB) & vs.\ AD & grad.\ err & cosine\\
\midrule
64 & 1.32 & 4.9$\times$ & $4.2\times10^{-3}$ & 0.9999914\\
128 & 0.91 & 7.1$\times$ & $5.5\times10^{-3}$ & 0.9999849\\
256 & 0.71 & 9.1$\times$ & $7.4\times10^{-3}$ & 0.9999731\\
512 & 0.61 & 10.6$\times$ & $1.1\times10^{-2}$ & 0.9999408\\
none & 0.54 & 12.1$\times$ & $1.7\times10^{-2}$ & 0.9998566\\
\bottomrule
\end{tabular}
\caption{The adjoint at the operating point ($N=2000$, $1024^2$, float32, one
24\,GB GPU). Top: storing the trajectory at this depth would take about
$50$\,GB for the $N\!\times\!P$ per-pixel states, an analytic figure we did
not run, so the practical baseline is automatic differentiation with
checkpoints every $20$ strokes. That interval was fixed a priori and not swept; the card had headroom.
Regenerating the trajectory instead costs $0.75$\,GB at a
comparable step time ($23.89$ against $23.00$\,s); the fused kernels trade some of that back for a $44\times$
faster step than checkpointed autograd, and are used for the transport-arm
fits. The side table holds
9.88 records per pixel (124\,MB). Bottom: the checkpoint interval trades memory
against float32 drift in the regenerated positions; we use $256$. The
sweep is a separate run from the top table, which is why its $256$ row
reads $0.71$ rather than $0.75$\,GB. Gradient
error is relative to the checkpointed-autograd gradient, exact up to
float32 rounding, and the cosine similarity of the
full parameter gradient never falls below $0.9998$.}\label{tab:adjoint}
\end{table}

\paragraph{Memory and runtime.} Table~\ref{tab:adjoint}. Storing the full per-pixel
trajectory at $N=2000$, $1024^2$ would need about $50$\,GB for the state
alone, an analytic figure. Against the practical baseline --- automatic
differentiation with checkpoints every $20$ strokes, $6.49$\,GB ---
regenerating costs $0.75$\,GB at a comparable step time ($23.89$ against
$23.00$\,s). The baseline interval ($20$) was fixed before the comparison and was not swept; across our own sweep the ratio ranges from $4.9$ to $12.1\times$, and it is $8.7\times$ at the adopted interval of $256$ in the run of the top block
(the sweep's own run at that interval gives $9.1\times$; the two are separate runs, as the caption states). Gradient error against the
checkpointed-autograd gradient grows monotonically with the interval and is
$7.4\times10^{-3}$ relative at the adopted one, with cosine similarity
$0.99997$ over the full parameter vector, consistent with float32 error
accumulating in the regenerated positions.

\paragraph{Fitting performance.} The absolute footprint is the practical figure: the fused implementation
peaks at $1.72$\,GB and the PyTorch
recurrence at $0.75$\,GB for a $2000$-stroke program at $1024^2$, well within one consumer GPU. A full optimisation of $240$ steps takes
$234$\,s on one RTX PRO 4000. The optimisation step --- forward, backward and
update --- takes $0.55$\,s at $N=2000$ and scales linearly in $N$: $0.075$,
$0.15$, $0.30$, $0.55$ and $1.0$\,s at $N=250$, $500$, $1000$, $2000$ and
$4000$; at $512^2$ it is $0.075$\,s and at $2048^2$ $2.5$\,s.

\paragraph{Implementation.} The vector--Jacobian products of
\eqref{eq:state} are written analytically rather than obtained by differentiating the renderer,
so that the whole step can be fused into a single kernel; this is the
difference between the PyTorch and fused rows in the upper block of
Table~\ref{tab:adjoint}. They agree with automatic differentiation to
$5.4\times10^{-15}$ in float64 and with central differences of the rendered
value to $2.5\times10^{-9}$; the float32 error of the full $2000$-stroke
replay is the $7.4\times10^{-3}$ of Table~\ref{tab:adjoint}.

\section{Experiments}\label{sec:exp}
\subsection{Protocol}\label{sec:design}
We use photographs of five \emph{ebru} marbled sheets as target images: flower,
tulip, teardrops, battal, and lale. \emph{Ebru} is a living
practice, inscribed on the UNESCO Representative List of the Intangible
Cultural Heritage of Humanity in 2014, and it has its own named, ordered
stroke vocabulary: \emph{battal} is the sprinkled ground, and the floral
patterns are drawn with a stylus over such a ground \cite{Wolfe1990}. That
vocabulary is not ours. The pattern types of four of the five references are
traditionally made largely by zero-injection stylus and comb work, which our
stroke family excludes (Sec.~\ref{sec:limits}); the recovered programs are
therefore not the traditional strokes. This also bears on
Sec.~\ref{sec:groundtruth}: where the model's stroke family differs from the
one that made a sheet, there is no correspondence to recover. All five are photographs of marbled sheets published on Wikimedia Commons
under free licences, cropped to a centre square fixed before any fitting and
resampled to $512^2$; the supplement lists each file, its author and its
licence, and the three out-of-domain artworks of Sec.~\ref{sec:result} are
public-domain reproductions from the same source. The crops are preprocessed to $512^2$
and are resampled to $1024^2$ by Lanczos filtering for
fitting and scoring; the $512^2$ column of Table~\ref{tab:resolution} shows
the effect of replay and scoring at that common resolution. The images are used only
as optimisation targets. Flower and battal were used while selecting the constants in
Table~\ref{tab:hyper}, together with a sixth sheet excluded from the corpus because its source could not be documented; the remaining three were held out from that process. Every main result is a hard render of an
exported program. Unless noted otherwise, each program has $N=2000$ strokes
and $26{,}024$ optimised parameters, and is fitted for 240 Adam steps from the
same rule-based initial program.

We denote by $\Pi_A$ the program obtained after the finite-step optimisation
of Eq.~\eqref{eq:inverse} with the relaxed transport renderer. For the
structural control, $\Pi_U$ is obtained with each material map set to the
identity, using the same parameter family and step budget. 

$\Pi_A$ is trained through the fused implementation and $\Pi_U$ through the
reference implementation, which alone exposes the inert-mark path. The
data-term scale of Table~\ref{tab:hyper} is set from each arm's own initial
loss and differs between arms by $1.36$--$2.22\times$, so the effective
regularisation differs by that factor. That factor does not account for the
difference the control reports: $\Pi_U$ reaches a \emph{lower} loss than
$\Pi_A$ does under each arm's own renderer on 4/5 references in MSE and 5/5 in
LPIPS-Alex (Sec.~\ref{sec:result}), which weighs against under-optimisation of the control as the explanation. Adam is also approximately invariant to a global scale on
the objective, so the mismatch acts mainly on the relative weight of the length
regulariser. To keep
evaluation matched, both exported programs are re-rendered by the same hard
reference implementation under both models, producing the four cells
$\mathcal{L}(\mathcal{R}_M(\Pi_X),I^\ast)$, $M,X\in\{A,U\}$: two optimisation
runs per reference, each scored twice. Pixel MSE is primary and LPIPS-Alex is
reported as a secondary metric.

\paragraph{External stroke fitter.} Stylized Neural Painting \cite{Zou2021} fits
parametric brush strokes to an image through a differentiable stroke renderer. Its
strokes are inert marks, so it is a raster comparator: it can be compared with
$\Pi_A$ only as an image and with $\Pi_U$ only under $\mathcal{R}_U$. We use the
authors' code and pre-trained oil-paint renderer with the released defaults: the
progressive schedule of five grid levels with $36$ strokes per block, giving $1980$
strokes of $12$ parameters ($23{,}760$ values, against our $26{,}024$), of which
$1756$--$1902$ pass the renderer's minimum-size test; the paper's objective ($L_1$ plus
optimal transport); and a hard render of the final strokes at $1024^2$. Each reference
is fitted from a white and from a black initial canvas and the better is kept. The
render is scored against the same $1024^2$ target by the same code as every other
row. Each fit takes about 7\,min on the same card, with two initial canvases per
reference. A second seed
moves each reference's error by $0.7$--$3.7\%$.

\paragraph{Error scale and controls.} Reference images differ in variance by
almost an order of magnitude, so every error we report is normalised by the
MSE of the best constant image, which puts the five references on one scale.
With one exception the comparators are internal ablations: the
inert-mark arm shares this paper's vocabulary, initial program, seed,
resolution and step budget and differs in $\psi_i$ and, through the data-term
scale, in effective regularisation (above). The exception is a published stroke fitter run at matched stroke count, which places that control
against a system we did not build. Where two
arms both sit above the constant-image baseline, we report their ratio only
alongside it. Three series --- the
ablations of Sec.~\ref{sec:abl} and the paint-budget sweep of
Sec.~\ref{sec:process} --- were trained under the deep-feature objective and
are read against their own noise band in those units. The spreads in
Table~\ref{tab:noise} measure how far a re-run moves and are used as such.

\subsection{Recovered programs}\label{sec:result}
Figure~\ref{fig:recovery} shows the primary result: a single target image is
converted into a replayable 2000-stroke program under the prescribed
deposit-and-transport model; Table~\ref{tab:2x2} lists the errors. Repeated
motifs and large silhouettes are retained, while the all-over battal pattern
remains the hardest reference because its smallest structures approach the
stroke-width floor. Battal is the one pure-insertion pattern in the set and
the model fits it worst, while the stylus-drawn florals, whose making our
stroke family does not model at all, score best. The floor ratio therefore
follows the image statistics rather than fidelity to the medium; the
out-of-domain targets below show the same.

The scale for these errors is the best constant image, whose MSE is the
image variance. Every pixel-MSE image error in the cross-method comparisons is reported normalised by that floor. The recovered programs sit below it on all five references, by a geometric
mean of $3.67\times$ (Table~\ref{tab:2x2}), and re-fitting from three seeds
moves that figure by less than the seed band (Sec.~\ref{sec:stats}). The ratio serves as a scale for the errors and should not be read as a
ranking of methods: a $64^2$ thumbnail with half the
parameters reaches a lower pixel error still
(Table~\ref{tab:compensation}), as any transport-free representation can.

\paragraph{An external stroke fitter.} Stylized Neural Painting, run at
matched stroke count (Sec.~\ref{sec:design}), is a raster of the same order as
the recovered program: below the flat floor on 5/5 (geometric mean
3.11$\times$, interval $[2.35, 4.13]$), ahead of $\Pi_A$
in pixel MSE on 2/5 references and behind it on 3/5, and behind it
in LPIPS-Alex on 4/5 (Table~\ref{tab:2x2}; the supplement gives
the LPIPS cells and the renders). Its role here is to place the matched
control. Under its own renderer $\mathcal{R}_U$, the inert-mark arm $\Pi_U$ is
a raster fitter of the same standing as the published one --- ahead of it in
pixel MSE on 4/5 and in LPIPS-Alex on 5/5 --- so the
program that lands above the floor once executed was not produced by a weak
stroke fitter; it was produced by a competent one and executed under the
transport its fitting ignored. That execution is the comparison a stroke-based
method cannot enter, since its output has no transport arm.

The matched control, which differs from the transport arm in $\psi_i$ and in data-term scale (Sec.~\ref{sec:design}), shows the role of coupled recovery. Fitting the same
vocabulary with every material map set to the identity and then executing the
result gives $\Pi_U$, which is $6.5\times$ worse than $\Pi_A$ and \emph{above}
the flat floor on $5/5$ references (Table~\ref{tab:2x2}): executed under
transport, a program fitted without it no longer reconstructs the target.

Under $\mathcal{R}_U$, $\Pi_U$ reaches a lower loss than $\Pi_A$ does under
$\mathcal{R}_A$ on most references: fitting without transport is the easier
problem, and if a
raster approximation is all that is wanted, modelling transport is a cost.
The benefit of modelling transport is a program that remains valid when executed under the transport model. The held-out references, which took no part in choosing the constants, score
no worse than the two development references (Table~\ref{tab:2x2}).

\paragraph{Out of domain.} The floor ratio measures image approximation and nothing specific to
marbling: the same solver with the same settings, given three artworks instead of marbled sheets, reaches
$3.8\times$ on \emph{The Great Wave}, $2.7\times$ on \emph{The Starry Night}
and $8.5\times$ on the \emph{Mona Lisa} (Fig.~\ref{fig:ood}), the last
higher than any ebru reference. The representation approximates non-marbling targets as readily; the three
calibrate the ratios in Table~\ref{tab:2x2}. The stroke fitter, run on the
same three, is ahead in pixel MSE on two of them (supplement).

\begin{figure*}[t]\centering
\noindent\makebox[0.3333\textwidth]{\footnotesize The Great Wave}%
\makebox[0.3333\textwidth]{\footnotesize The Starry Night}%
\makebox[0.3333\textwidth]{\footnotesize Mona Lisa}\\[2pt]
\noindent\makebox[0.1666\textwidth]{\scriptsize Target}%
\makebox[0.1666\textwidth]{\scriptsize Replay ($3.8\times$)}%
\makebox[0.1666\textwidth]{\scriptsize Target}%
\makebox[0.1666\textwidth]{\scriptsize Replay ($2.7\times$)}%
\makebox[0.1666\textwidth]{\scriptsize Target}%
\makebox[0.1666\textwidth]{\scriptsize Replay ($8.5\times$)}\\[2pt]
\includegraphics[width=\textwidth,trim=0 0 0 44bp,clip]{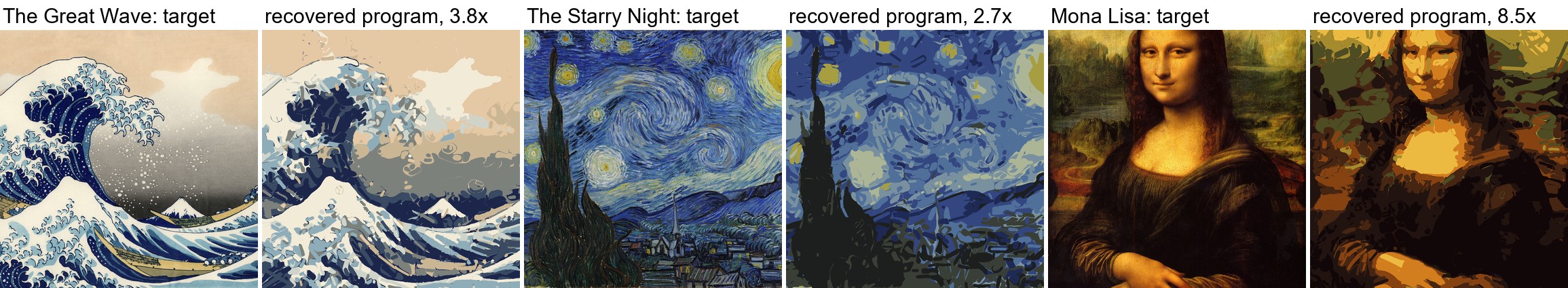}
\caption{\textbf{Out-of-domain targets.} Three artworks and the replay of
the 2000-stroke program recovered from each under the headline settings,
with the floor ratio. The programs approximate non-marbling targets
using the same deposition vocabulary; this ratio does not measure marbling fidelity.}\label{fig:ood}
\end{figure*}

\begin{table*}[t]\centering\footnotesize\setlength{\tabcolsep}{3.5pt}
\begin{tabular}{l r rr rr rr rr}\toprule
& & \multicolumn{4}{c}{transport applied ($\mathcal{R}_A$)}
& \multicolumn{2}{c}{transport omitted ($\mathcal{R}_U$)} & \multicolumn{2}{c}{external, raster}\\
\cmidrule(lr){3-6}\cmidrule(lr){7-8}\cmidrule(lr){9-10}
reference & flat floor & $\Pi_A$ & $\Pi_U$ & floor$/\Pi_A$ & $\Pi_U/$floor
& $\Pi_A$ & $\Pi_U$ & SNP & floor$/$SNP\\
\midrule
flower & 0.0151 & 0.0046 & 0.0282 & 3.25$\times$ & 1.87$\times$ & 0.0179 & 0.0053 & 0.0039 & 3.83$\times$\\
tulip$^\dagger$ & 0.0393 & 0.0135 & 0.0643 & 2.90$\times$ & 1.64$\times$ & 0.0399 & 0.0121 & 0.0152 & 2.58$\times$\\
teardrops$^\dagger$ & 0.0317 & 0.0061 & 0.0544 & 5.22$\times$ & 1.72$\times$ & 0.0353 & 0.0048 & 0.0077 & 4.14$\times$\\
battal & 0.0865 & 0.0334 & 0.1531 & 2.59$\times$ & 1.77$\times$ & 0.1285 & 0.0213 & 0.0329 & 2.63$\times$\\
lale$^\dagger$ & 0.0138 & 0.0026 & 0.0259 & 5.22$\times$ & 1.88$\times$ & 0.0154 & 0.0022 & 0.0051 & 2.72$\times$\\
\midrule
geometric mean & & & & \textbf{3.67$\times$} & \textbf{1.77$\times$} & & & & 3.11$\times$\\
95\% CI & & & & [2.43, 5.54] & [1.65, 1.91] & & & & [2.35, 4.13]\\
three seeds, GM / worst & & & & 3.71$\times$ / 3.66$\times$ & & & & & \\
\bottomrule
\end{tabular}
\caption{Recovery error against the flat floor, and the matched transport
control (hard-render pixel MSE at $1024^2$; lower is better). \emph{Flat floor}
is the MSE of the best constant image, the cheapest non-trivial reconstruction
of that reference. $\Pi_A$ is recovered jointly through the transport chain;
$\Pi_U$ is fitted with every material map set to the identity. Both exported
programs are scored by the same hard reference renderer under both models.
$\Pi_A$ beats the floor on 5/5 references (geometric mean $3.67\times$ below
it). $\Pi_U$, executed, is \emph{above} the floor on 5/5 --- fitting without
transport and then executing with it is worse than painting a single colour. Intervals are paired $t$-intervals on log-ratios. $^\dagger$: held out
during development. \emph{SNP} is Stylized Neural Painting \cite{Zou2021} at matched
stroke count, scored on its own $1024^2$ render (Sec.~\ref{sec:design}); it fits inert
marks and has no transport arm, so it is a raster comparator only.}
\label{tab:2x2}
\end{table*}

\begin{figure*}[t]\centering
\includegraphics[width=\textwidth]{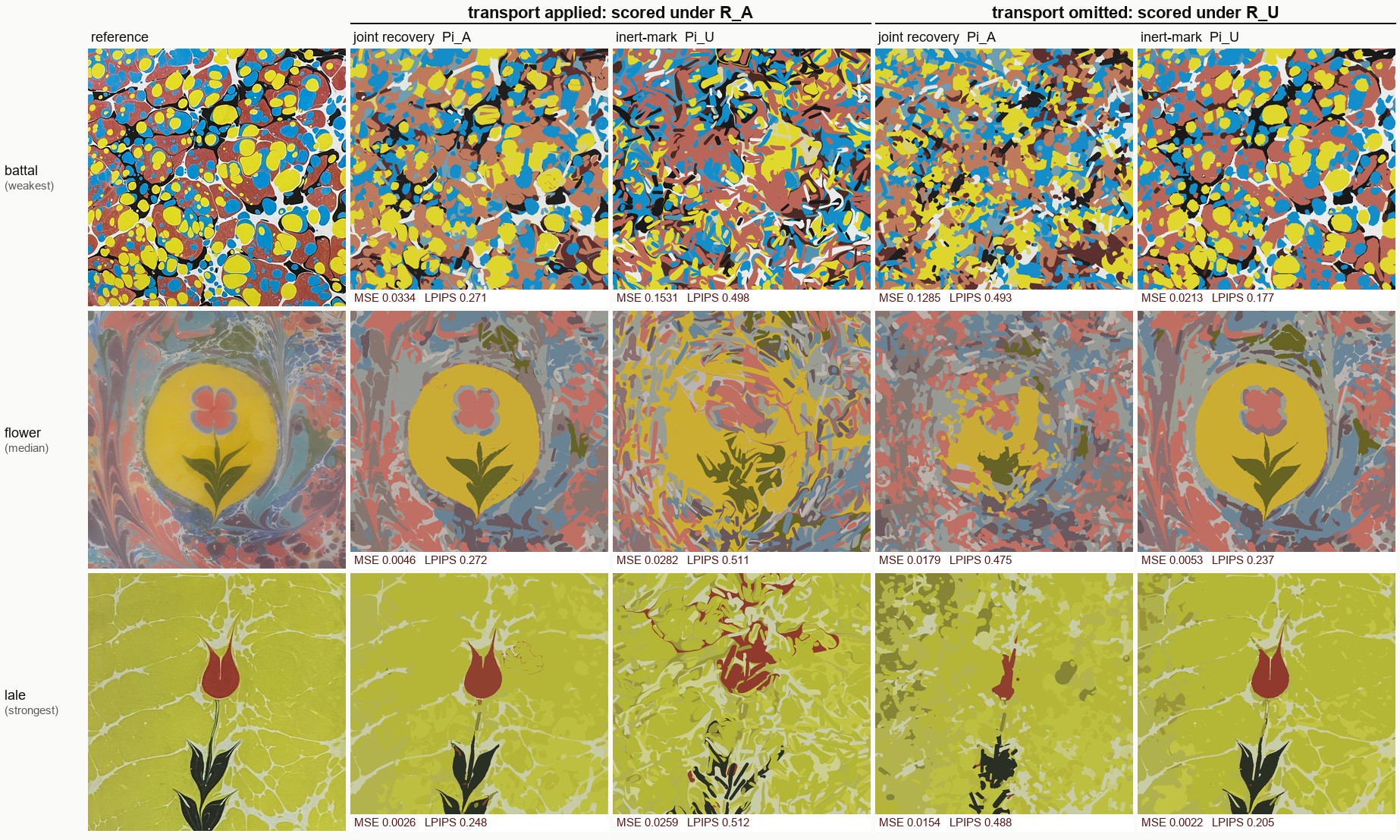}
\caption{\textbf{Why the ratios in Table~\ref{tab:2x2} are reported against a
floor.} Three of the five references --- battal, flower and lale: the weakest,
the median and one of the two strongest floor ratios. Left pair: both programs executed with
transport ($\RA$). Right pair: both scored with transport omitted ($\RU$).
Each program reconstructs under the model it was fitted for and disintegrates
under the other, but the failures are not degraded reconstructions: the two
inner result columns ($\Pi_U$ executed, $\Pi_A$ imagined) are visibly further
from the target than a single flat colour would be, as the floor column of
Table~\ref{tab:2x2} records. All
five references in the supplement.}\label{fig:hero}
\end{figure*}

\subsection{Closed-form compensation baseline}\label{sec:compensation}
A natural alternative to joint optimisation is to fit an inert-mark program
and then translate each stroke so that subsequent transport carries it to the position selected by the inert-mark fit. Because $\psi$ has a
closed-form inverse this can be done in one pass, with no gradients and no
re-optimisation. We evaluate two variants: \emph{anchor compensation}
pre-images the segment start and carries the offset vector unchanged, and
\emph{endpoint compensation} pre-images both ends, which is stronger although
it changes segment length and hence inserted area. Both run from the last
stroke to the first, so each stroke is pre-imaged through the
already-compensated geometry of every later one, at a cost of $N(N-1)/2$ point transforms for the anchor variant and twice that
for the endpoint variant ($1{,}999{,}000$ and $3{,}998{,}000$ at $N=2000$), and
no derivatives. Radii are left
unchanged: the deformation of a footprint by later strokes is exactly what a
one-pass point correction cannot repair.

\begin{table}[t]\centering\scriptsize\setlength{\tabcolsep}{2.5pt}
\begin{tabular}{l rrrr r}\toprule
reference & flat floor & inert replay & anchor & endpoint & blocked\\
\midrule
flower & 0.0151 & 0.0282 & 0.0175 & 0.0156 & 1466\\
tulip & 0.0393 & 0.0643 & 0.0480 & 0.0412 & 1454\\
teardrops & 0.0317 & 0.0544 & 0.0353 & 0.0291 & 1371\\
battal & 0.0865 & 0.1531 & 0.0974 & 0.0833 & 1362\\
lale & 0.0138 & 0.0259 & 0.0196 & 0.0148 & 1422\\
\bottomrule
\end{tabular}

\vspace{5pt}
\begin{tabular}{lrr}\toprule
method (geometric mean, five references) & GM hard MSE & vs.\ joint\\\midrule
flat floor (best constant image) & 0.029504 & $3.67\times$\\
$64^2$ thumbnail (12,288 values) & 0.006777 & $0.84\times$\\
SNP \cite{Zou2021}, 1980 strokes (23,760 values) & 0.009478 & $1.18\times$\\
inert replay ($\Pi_U$ under $\mathcal{R}_A$) & 0.052304 & $6.50\times$\\
anchor pre-image & 0.035544 & $4.42\times$\\
endpoint pre-image & 0.029715 & $3.70\times$\\
endpoint, blocked steps skipped & 0.029374 & $3.65\times$\\
\textbf{joint recovery (ours)} & \textbf{0.008041} & \textbf{$1.00\times$}\\
\bottomrule
\end{tabular}
\caption{One-pass closed-form compensation against joint recovery, all five
references at $1024^2$, hard render, one reference renderer. Each variant
pre-images an inert-mark program's strokes through every later stroke's
inverse map, without gradients or re-optimisation. The $64^2$ thumbnail is the reference downsampled and bilinearly
upsampled; it is not a method, and it is listed because it bounds what
pixel error alone can establish at this parameter count. The stroke-fitter row is the external raster comparator of Table~\ref{tab:2x2}, included on the same footing. \emph{Blocked} counts
strokes of the endpoint variant whose pre-image fails to exist at some step: $1362$--$1466$ of $2000$
($68$--$73\%$). By Lemma~\ref{lem:obstruction} no exact pre-image exists for those
strokes, so no correction of this form can place them, and they carry essentially all of the
point round-trip residual --- where a pre-image exists the correction is exact to
$7.9\times10^{-5}$ canvas units. Skipping the blocked steps instead of
clamping them changes the result by less than the re-run band
($3.65\times$ against $3.70\times$), so the baseline is not handicapped by
that choice. Endpoint compensation removes a median $44.6\%$ of the
inert-replay error and closes a median $52.4\%$ of the gap to joint recovery,
yet still lands at the trivial floor: it beats a single constant colour on only
$2$ of $5$ references. Its error is $3.70\times$ that of joint recovery
(95\% interval $[2.44, 5.60]$, $5/5$).}\label{tab:compensation}
\end{table}

Table~\ref{tab:compensation}. Closed-form geometry alone removes a median
$44.6\%$ of the inert-replay error and closes a median $52.4\%$ of the gap to
joint recovery. But the compensated program still only reaches the flat floor: it beats a
single constant colour on 2 of 5 references. One-pass correction remains less accurate than joint fitting on these examples.

Lemma~\ref{lem:obstruction} identifies an obstruction to this pointwise correction. Pre-imaging stroke $i$ through a later stroke $j$ requires a
point of $\Phi_j^{-1}$; if the point lies inside $K_j$ there is none, because
the extended $\psi_j$ maps all of $K_j$ to the segment, on which $\Phi_j$ is
undefined. This occurs for $1362$--$1466$ of the $2000$ strokes. Where a pre-image does
exist the correction is essentially exact --- round trip within
$7.9\times10^{-5}$ canvas units --- and where it does not, the error is
$0.02$--$0.04$, several hundred times larger. The large point round-trip residuals occur in
the blocked subset; the diagnostic is geometric, so it locates the failure of
the correction rather than partitioning the image-space residual. The same loss of coordinates that replay storage handles obstructs an exact
pre-image correction.

Two checks establish that the comparison is not an artefact of its construction: skipping a blocked
step instead of clamping it onto the segment moves the geometric mean from
$3.70\times$ to $3.65\times$, inside the re-run band, and compensating both
endpoints rather than the anchor alone is better ($3.70\times$ against
$4.42\times$), so the ceiling is not an artefact of correcting too few points.
Closed-form knowledge of the transport map is therefore not sufficient; joint recovery also adjusts width, colour and visibility, so the comparison measures joint fitting, not gradients alone.

\subsection{Transfer across transport constructions}\label{sec:transfer}
To test whether recovery overfits the foliation of our capsule map, we score
both programs under a third renderer $\RD$ that was not used for fitting.
Stroke $i$'s displacement is replaced by a composition of $K=16$ published
drop maps \cite{Lu2012} placed along its segment, each of radius
$s=\sqrt{A_{\ell_i}(r_i)/(K\pi)}$ so that the inserted area equals the
stroke's own. Coverage is unchanged; only $\psi_i$ is replaced. Each drop map preserves
area on its exterior. Because capsule coverage is retained, $\RD$ is a
controlled hybrid renderer, not a simulation of the complete coloured
ordered-drop deposition process. Its displacement is built from a published
primitive and differs from $\RA$: under \eqref{eq:D} it is $17$--$36\%$ from $\psi_A$ over the first-quartile to
ninetieth-percentile aspect ratios of the recovered strokes
(Table~\ref{tab:foliation}), and it is \emph{closer} to line-source
potential flow than $\psi_A$ is ($9\%$ against $27\%$ at the median aspect ratio). Figure~\ref{fig:prim} shows the two
operators, a single drop of the same area, and the reference flow acting on
a grid.

The substitution degrades $\Pi_A$, and Table~\ref{tab:transfer} quantifies the effect: executed under the alternative operator, $\Pi_A$ is worse by $17.8$ to
$51.6\%$ than under the operator it was fitted for (median $24.8\%$), while $\Pi_U$, whose fit is invalidated under either transport renderer, scores within $1\%$ of its $\RA$
value; against its own fitting renderer $\RU$, which applies no transport,
its change is several hundred percent, and the table reports the $\RA$
comparison for both programs. That degradation measures how much of the
fit is specific to our foliation. What survives it is that $\Pi_A$ remains a reconstruction ---
still below the flat floor on 5/5 --- while $\Pi_U$ remains above it, and the
$4.9\times$ separation holds (95\% interval $3.6$--$6.8$, same direction on
all five references and in LPIPS-Alex).

$\RD$ is a finite ordered-drop alternative built from a published primitive, not an independent physical model: its $K$ drops are
laid in one sweep along the segment, so it carries an injection schedule of
its own, and refining $K$ is not a limit process towards the line source.
Table~\ref{tab:foliation} measures what makes it a useful arbiter: at the
median aspect ratio it sits closer to $\psi_L$ than our map does. The
reference $\psi_L$ is not itself the evaluator for reasons of cost: each point is
carried by an RK4 integration of the source velocity, which in our implementation
is too expensive to compose $2000$ times at image resolution, whereas a
closed-form displacement composes at the cost of a few arithmetic operations. The arbiter is therefore built from a published closed-form primitive. The test varies the foliation of an insertion operator at
matched inserted area; it says nothing about a different displacement class,
such as the zero-injection tine and stylus operators of \cite{Lu2012}, which
are area-preserving and invertible and would be the natural next arbiter.

\begin{table*}[t]\centering\footnotesize\setlength{\tabcolsep}{6pt}
\begin{tabular}{l rr rr r}\toprule
& \multicolumn{2}{c}{$\Pi_A$} & \multicolumn{2}{c}{$\Pi_U$} & \\
\cmidrule(lr){2-3}\cmidrule(lr){4-5}
reference & MSE & vs.\ $\mathcal{R}_A$ & MSE & vs.\ $\mathcal{R}_A$ & $\Pi_U/\Pi_A$\\
\midrule
flower & 0.0055 & +17.8\% & 0.0282 & -0.1\% & 5.1$\times$\\
tulip & 0.0165 & +22.3\% & 0.0647 & +0.6\% & 3.9$\times$\\
teardrops & 0.0088 & +45.5\% & 0.0544 & -0.0\% & 6.2$\times$\\
battal & 0.0417 & +24.8\% & 0.1526 & -0.3\% & 3.7$\times$\\
lale & 0.0040 & +51.6\% & 0.0257 & -1.0\% & 6.4$\times$\\
\midrule
median & & +24.8\% & & -0.1\% & 5.1$\times$\\
GM [95\% CI] & & & & & 4.93$\times$ [3.58, 6.78]\\
\bottomrule
\end{tabular}
\caption{Transfer to $\mathcal{R}_D$, a displacement built from sixteen
area-matched drop maps per stroke, which neither program was fitted against.
\emph{vs.\ $\mathcal{R}_A$} is the change against the same program's score
under our transport renderer (Table~\ref{tab:2x2}), for both programs:
$\Pi_A$, fitted under $\mathcal{R}_A$, pays $+17.8$ to $+51.6\%$ for the
foreign operator, median $+24.8\%$; $\Pi_U$, which either transport
destroys, moves $-1.0$ to $+0.6\%$ (against its own fitting renderer
$\mathcal{R}_U$, which applies no transport, its change is several hundred
percent). $\Pi_A$
remains ahead on 5/5 in MSE and 5/5 in LPIPS-Alex.}
\label{tab:transfer}
\end{table*}

\begin{figure}[!htb]\centering
\includegraphics[width=\columnwidth]{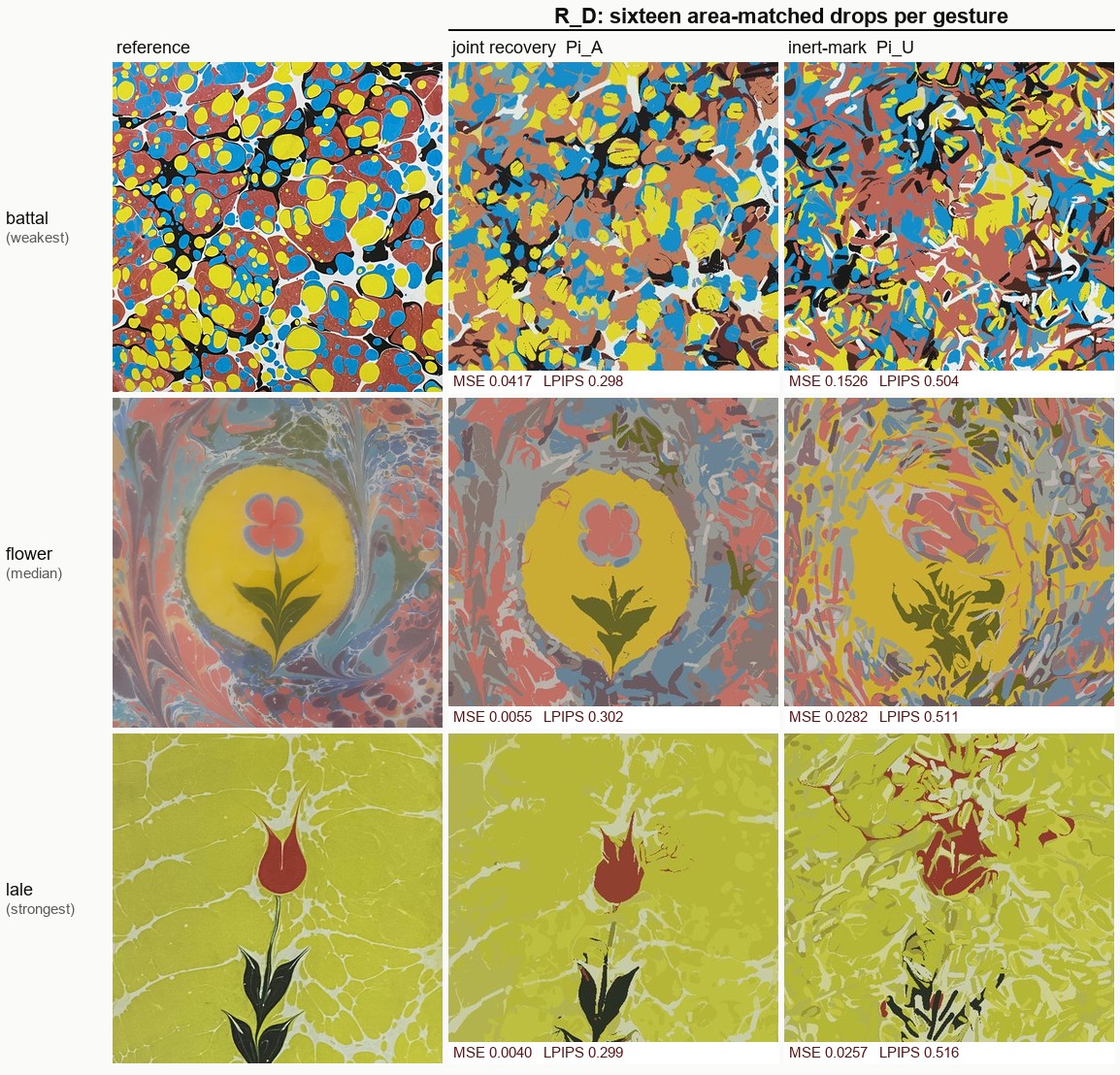}
\caption{\textbf{Under a transport neither program was fitted against.} Both
programs executed under $\RD$, sixteen area-matched drop maps per stroke,
same three references as Fig.~\ref{fig:hero}. $\Pi_A$ degrades --- by a median
$+24.8\%$ (Table~\ref{tab:transfer}) --- but remains a reconstruction;
$\Pi_U$ does not become one. All five in the supplement.}\label{fig:transfer}
\end{figure}

\subsection{Synthetic ground truth}\label{sec:groundtruth}
Every result so far is measured against historical sheets, where we cannot
know whether \emph{any} 2000-stroke program reproduces the target. That
confounds two very different failures: insufficient model expressivity, and failure of the optimiser to find a good fit. To separate them we build targets with known generating programs. Twelve programs are drawn from the model's own
prior --- the solver's radius schedule, the empirical offset-length
distribution of the recovered programs (median $\ell=0.018$, $78\%$ above the
export threshold, matching Sec.~\ref{sec:abl}), uniform anchors and a random
eight-colour palette --- and rendered under $\RA$. Each target is therefore
the exact render of a program that lies inside the feasible set, so a
zero-error solution provably exists. We then recover from the image alone,
with the settings of Table~\ref{tab:hyper} unchanged.

\begin{figure*}[t]\centering
\includegraphics[width=\textwidth]{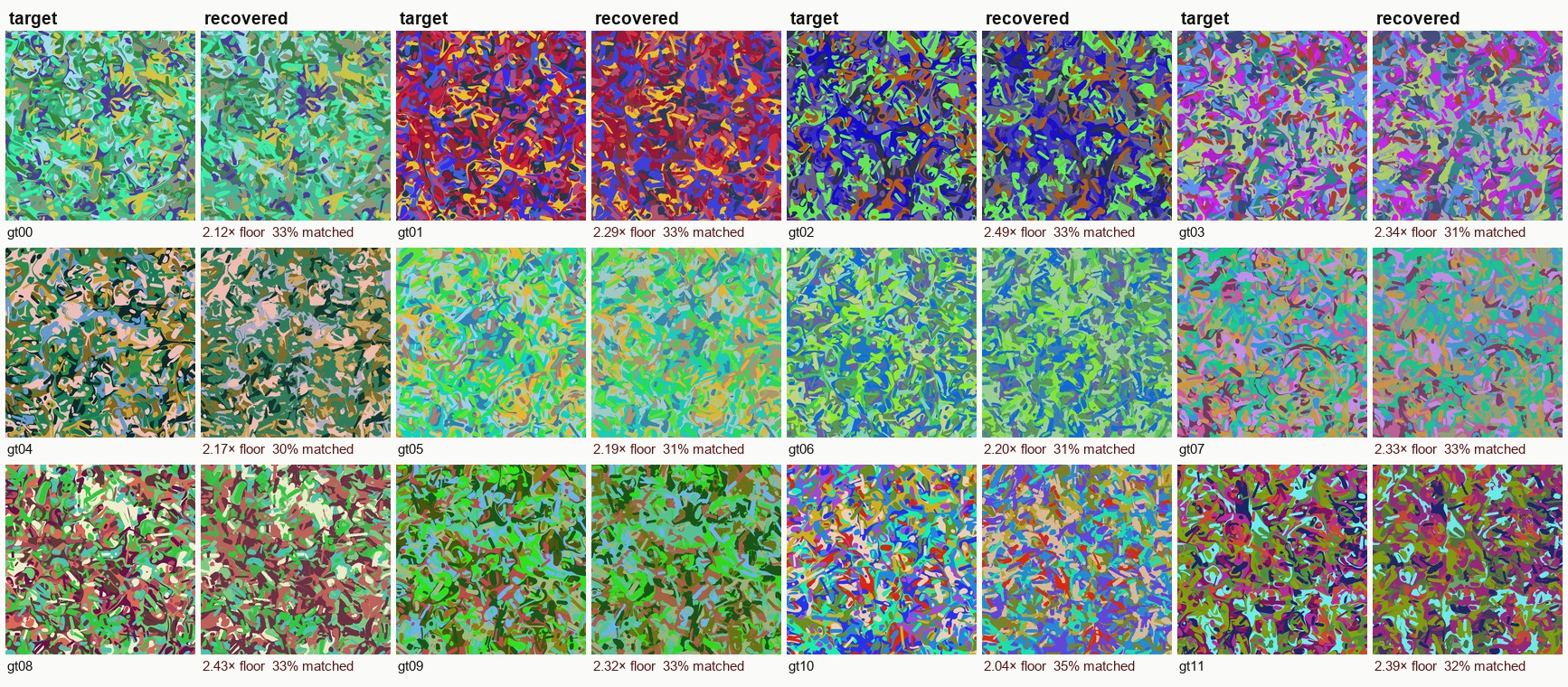}
\caption{\textbf{Synthetic ground truth.} Each target is the render of a known
2000-stroke program drawn from the model's prior; beside it is the render of
the program recovered from that image alone. Annotations give the floor ratio
and the fraction of visible strokes matched inside the true radius, from
the matching pass that produced the figure; the final matched-null analysis
of Table~\ref{tab:a2} differs from these annotations by up to one point.
These targets deliberately do not look like marbling:
the prior places anchors uniformly and draws colours at random, so the images
carry none of the large-scale structure a real sheet has. This experiment
tests the fitting procedure, not the medium.}\label{fig:a2}
\end{figure*}

\begin{table*}[t]\centering\footnotesize\setlength{\tabcolsep}{5pt}
\begin{tabular}{l rrr rr rr rr}\toprule
& \multicolumn{3}{c}{image ($1024^2$)} & \multicolumn{2}{c}{visible strokes}
& \multicolumn{2}{c}{median match (canvas)} & \multicolumn{2}{c}{within true radius}\\
\cmidrule(lr){2-4}\cmidrule(lr){5-6}\cmidrule(lr){7-8}\cmidrule(lr){9-10}
target & recovered & floor & floor$/$rec & truth & recovered
& recovered & matched null & recovered & matched null\\
\midrule
gt00 & 0.0219 & 0.0464 & 2.12$\times$ & 1029 & 1206 & 0.0162 & 0.0161 & 33\% & 33\%\\
gt01 & 0.0314 & 0.0720 & 2.29$\times$ & 1071 & 1220 & 0.0170 & 0.0165 & 32\% & 33\%\\
gt02 & 0.0270 & 0.0672 & 2.49$\times$ & 1040 & 1250 & 0.0163 & 0.0166 & 32\% & 32\%\\
gt03 & 0.0249 & 0.0582 & 2.34$\times$ & 1020 & 1215 & 0.0169 & 0.0176 & 30\% & 31\%\\
gt04 & 0.0331 & 0.0718 & 2.17$\times$ & 1042 & 1222 & 0.0171 & 0.0160 & 30\% & 33\%\\
gt05 & 0.0268 & 0.0586 & 2.19$\times$ & 1046 & 1231 & 0.0171 & 0.0174 & 30\% & 34\%\\
gt06 & 0.0207 & 0.0455 & 2.20$\times$ & 1032 & 1215 & 0.0169 & 0.0173 & 31\% & 31\%\\
gt07 & 0.0239 & 0.0557 & 2.33$\times$ & 1018 & 1201 & 0.0168 & 0.0178 & 33\% & 29\%\\
gt08 & 0.0270 & 0.0655 & 2.43$\times$ & 1057 & 1239 & 0.0168 & 0.0170 & 33\% & 32\%\\
gt09 & 0.0259 & 0.0601 & 2.32$\times$ & 1015 & 1237 & 0.0166 & 0.0156 & 34\% & 36\%\\
gt10 & 0.0487 & 0.0991 & 2.04$\times$ & 1037 & 1242 & 0.0161 & 0.0171 & 35\% & 32\%\\
gt11 & 0.0286 & 0.0683 & 2.39$\times$ & 1045 & 1227 & 0.0166 & 0.0172 & 31\% & 32\%\\
\midrule
summary & & & \textbf{2.27$\times$} & & & 0.0168 & 0.0170 & 32\% & 32\%\\
\bottomrule
\end{tabular}
\caption{Synthetic ground truth. Each target is the render of a \emph{known}
2000-stroke program drawn from the model's prior, so a zero-error solution
provably lies inside the feasible set. \emph{Image}: recovery reaches a
geometric mean $2.27\times$ below the flat floor, against $3.67\times$ on the
historical references; on this family the residual is the fitting
procedure's gap, not model misfit.
\emph{Visible strokes} counts those owning at least $50$ pixels of the final
$1024^2$ image (scaled for the $512^2$ matching pass). Strokes below
this threshold are excluded by the matching protocol, not proved unidentifiable.
\emph{Median match} is the optimal-assignment distance between visible
recovered and visible true strokes; the \emph{matched null} repeats it with
another target's recovered program, giving a similarly sized visible
candidate pool produced by the same fitting procedure. The twelve targets form a paired design and we test it: the paired mean
difference in median matched distance is $-1.6\times10^{-4}$ canvas units
($t(11)=-0.77$, 95\% interval $[-6.1,+2.9]\times10^{-4}$, recovery nominally
better on 8 of 12), and the within-radius fractions differ on 6 targets each
way. The interval excludes any systematic reduction larger than $3.8\%$ of the
$0.0168$ median, which is the sensitivity of this test at $n=12$. Recovered
programs also carry $18\%$ more visible strokes than the truth on 12 of 12
targets. This positional comparison bounds one statistic under one recovery
procedure; it does not establish intrinsic non-identifiability.}\label{tab:a2}
\end{table*}

Figure~\ref{fig:a2} shows what recovery looks like here, and
Table~\ref{tab:a2} reports the image error and the stroke correspondence separately.

\paragraph{The image.} Recovery reaches a geometric mean $2.27\times$ below
the flat floor, against $3.67\times$ on the historical references. Exact representability does not make a target easier to fit. One candidate mechanism
is that the historical sheets carry large-scale structure the optimiser can
exploit while a program drawn from an uninformative prior does not
(Fig.~\ref{fig:a2}); testing it is future work. A zero-error program exists
here by construction, so on this family the residual is not model misfit but the optimiser's approximation gap, arising from the relaxed objective, the regulariser and the finite step budget. This bound applies to the synthetic family; the historical
sheets are addressed by the next control, which re-fits the reference renders of
the recovered historical programs. These targets are also representable by
construction. Recovery reaches a geometric mean $3.98\times$ below each render's own
constant-image baseline , compared with $3.67\times$ for the
original scans (one run per reference, headline settings). The targets and
their baseline variances differ, so these ratios do not bound model misfit on
the original scans. They show that the procedure leaves residual error even
when fitting images produced by its own model.

\paragraph{The program.} We restrict matching to strokes owning at least
$50$ pixels at $1024^2$ (scaled for the $512^2$ matching pass), about $1030$
of the $2000$ true strokes. Excluded strokes may still be partially visible and localisable; the threshold only defines the evaluated subset. We use optimal assignment on segment-midpoint distance and compare
with another target's recovered program, a null that shares the fitting
procedure and has a similarly sized candidate pool, unlike a null drawn from
another ground-truth program, which has systematically fewer visible
strokes.

The aggregate statistics are similar: median matched distance $0.0168$
against $0.0170$ for the null, and $32\%$ of matches inside the true radius
for both (Table~\ref{tab:a2}). The test therefore gives no evidence of generating-stroke correspondence beyond the cross-target null, for this procedure and this positional statistic.

\subsection{Aggregation and numerical variation}\label{sec:stats}
The reference image is the unit of analysis ($n=5$). Because losses are
positive and comparisons are multiplicative, we aggregate ratios on the log
scale and report geometric means with paired 95\% intervals; per-reference
values remain visible in every table. Table~\ref{tab:noise} gives the noise
floors, and there are two, in different units. In pixel MSE on the transport
arm --- the units of every cross-method comparison --- two runs of one
configuration and seed differ by $0.6$--$4.6\%$, and three seeds on all five
references spread $2.7$--$6.7\%$, median $6.1\%$. In the deep-feature units of the
ablation series of Sec.~\ref{sec:abl}, the across-seed spread is
$0.5$--$0.6\%$, and that is the band those series are read against. The seed runs use the headline configuration; seed $0$ differs from the
headline run by up to $4.6\%$, which we count in the spread. All internal programs are scored by one reference renderer from exported
programs, and every image comparison, including the external fitter's, uses
the same scoring code;
the exported programs score $0.9$--$14.6\%$ higher (worse) than the training
log's best-step score of the same program, which is taken before palette colours
are snapped at export (Table~\ref{tab:noise}); log scores are not used for any
comparison. The small reference set permits the large, consistent within-image comparisons reported here, but not population-level claims about marbling styles.

\begin{table*}[t]\centering\footnotesize\setlength{\tabcolsep}{6pt}
\begin{tabular}{l r c r r r}\toprule
contrast (ratio) & GM & 95\% CI & $t(4)$ & $d_z$ & sign\\
\midrule
executed, $\Pi_U/\Pi_A$ (MSE) & 6.50$\times$ & [4.19, 10.09] & 11.8 & 5.3 & 5/5\\
imagined, $\Pi_A/\Pi_U$ (MSE) & 5.12$\times$ & [3.14, 8.35] & 9.3 & 4.1 & 5/5\\
interaction, ratio of ratios (MSE) & 33.29$\times$ & [14.35, 77.23] & 11.6 & 5.2 & 5/5\\
transfer $\mathcal{R}_D$, $\Pi_U/\Pi_A$ (MSE) & 4.93$\times$ & [3.58, 6.78] & 13.9 & 6.2 & 5/5\\
executed, $\Pi_U/\Pi_A$ (LPIPS) & 1.93$\times$ & [1.80, 2.08] & 25.2 & 11.3 & 5/5\\
imagined, $\Pi_A/\Pi_U$ (LPIPS) & 2.37$\times$ & [2.05, 2.74] & 16.6 & 7.4 & 5/5\\
\bottomrule
\end{tabular}
\caption{Paired analysis on the log scale, $n=5$ references. GM is the
geometric-mean ratio, the interval a paired $t$-interval on the five
log-ratios, $d_z$ the standardised paired effect, and ``sign'' the number of
references with the predicted direction (one-sided sign-test floor
$2^{-5}=0.031$). The MSE contrasts between the inert arm and joint
recovery compare cells on opposite sides of the flat floor
(Table~\ref{tab:2x2}), so their size is not a quality statement. The minimum detectable paired effect at $n=5$, one-sided $\alpha=0.05$,
power $0.8$ is $d_z\approx1.33$. Executed MSE on the three
references not used during development: GM 7.5$\times$ [2.8, 20.2]
(development two: 5.3$\times$).}\label{tab:stats}
\end{table*}

\begin{table}[t]\centering\footnotesize\setlength{\tabcolsep}{2.5pt}
\begin{tabular}{l rr r r r}\toprule
& \multicolumn{3}{c}{re-run, same configuration} & \multicolumn{2}{c}{export comparison}\\
\cmidrule(lr){2-4}\cmidrule(lr){5-6}
reference & run 1 & run 2 & spread & log score & export vs.\ log\\
\midrule
flower & 0.004649 & 0.004596 & 1.1\% & 0.004609 & +0.9\%\\
tulip & 0.013519 & 0.013318 & 1.5\% & 0.013263 & +1.9\%\\
teardrops & 0.006070 & 0.006307 & 3.9\% & 0.005296 & +14.6\%\\
battal & 0.033431 & 0.033246 & 0.6\% & 0.032551 & +2.7\%\\
lale & 0.002633 & 0.002755 & 4.6\% & 0.002453 & +7.3\%\\
\bottomrule
\end{tabular}
\caption{Numerical variation, in the two units the paper uses. Left: two
optimisation runs of one configuration and seed, both scored by the hard
reference renderer, differ by 0.6--4.6\% in pixel MSE. Right: the training-time score of the best step, taken before palette
colours are snapped at export, against the reference render of the exported
program; the column is $100(\mathrm{MSE}_{\mathrm{export}}/
\mathrm{MSE}_{\mathrm{log}}-1)$. Rendering the exported program with both
kernels agrees within $0.15\%$, so the gap is colour snapping, not renderer
disagreement. Across three seeds on all five references the pixel-MSE spread is
2.7--6.7\%, median 6.1\%.
Seed $0$ differs from the headline run by up to $4.6\%$, which we count as
part of the spread. In the deep-feature
units of the ablation series the across-seed spread is 0.5--0.6\% (two
references, two seeds).}\label{tab:noise}
\end{table}

\begin{figure*}[!t]\centering
\includegraphics[width=\textwidth]{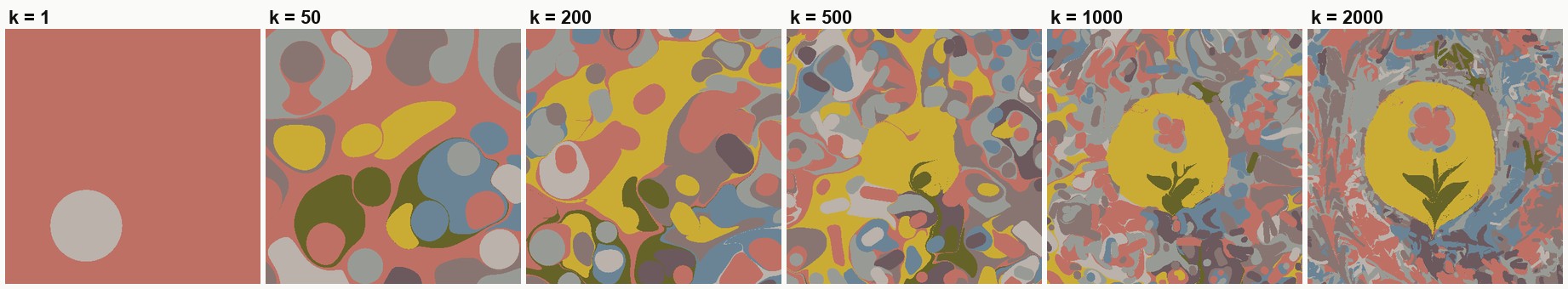}
\caption{\textbf{A recovered program can be replayed.} Hard renders of the
first $k$ strokes show the recovered flower program assembling in execution
order. This temporal decomposition is unavailable from a raster
reconstruction alone.}\label{fig:build}
\end{figure*}

\section{Program-level output}\label{sec:process}
The solver returns ordered parameters rather than a final raster image. Rendering
prefixes of the sequence exposes how the design is assembled
(Fig.~\ref{fig:build}); individual strokes, colours or whole stages can be
edited and replayed through the same dynamics.

\paragraph{Editing the program.} Figure~\ref{fig:interv} shows three representative edits, each replayed by the hard renderer under the same dynamics
with nothing re-optimised. Inserting one drop into the ground at stroke
$1000$ pushes the pigment laid before it and is itself pushed by everything
after it, so the vein pattern re-flows around it as the insertion model
prescribes while the motif is untouched. Recolouring the last fifth
of the program is exact, because transport does not depend on pigment, and
changes only the pixels those strokes own. Translating the last quarter
moves the late structure and re-flows the earlier pattern around it. The
edits change $7$--$41\%$ of the pixels. One attempted edit fails: selecting the strokes that
own a motif in the final image and translating their anchors does not
translate the motif, because each stroke's final position is the composition
of every later map, and the selection also captures early strokes that contribute only marginally to the region. Moving a motif therefore requires re-optimisation rather than a parameter
edit.

\begin{figure*}[t]\centering
\noindent\makebox[0.2\textwidth]{\footnotesize Target}%
\makebox[0.2\textwidth]{\footnotesize Recovered program}%
\makebox[0.2\textwidth]{\footnotesize Insert one drop}%
\makebox[0.2\textwidth]{\footnotesize Recolour last fifth}%
\makebox[0.2\textwidth]{\footnotesize Translate last quarter}\\[2pt]
\includegraphics[width=\textwidth,trim=0 0 0 44bp,clip]{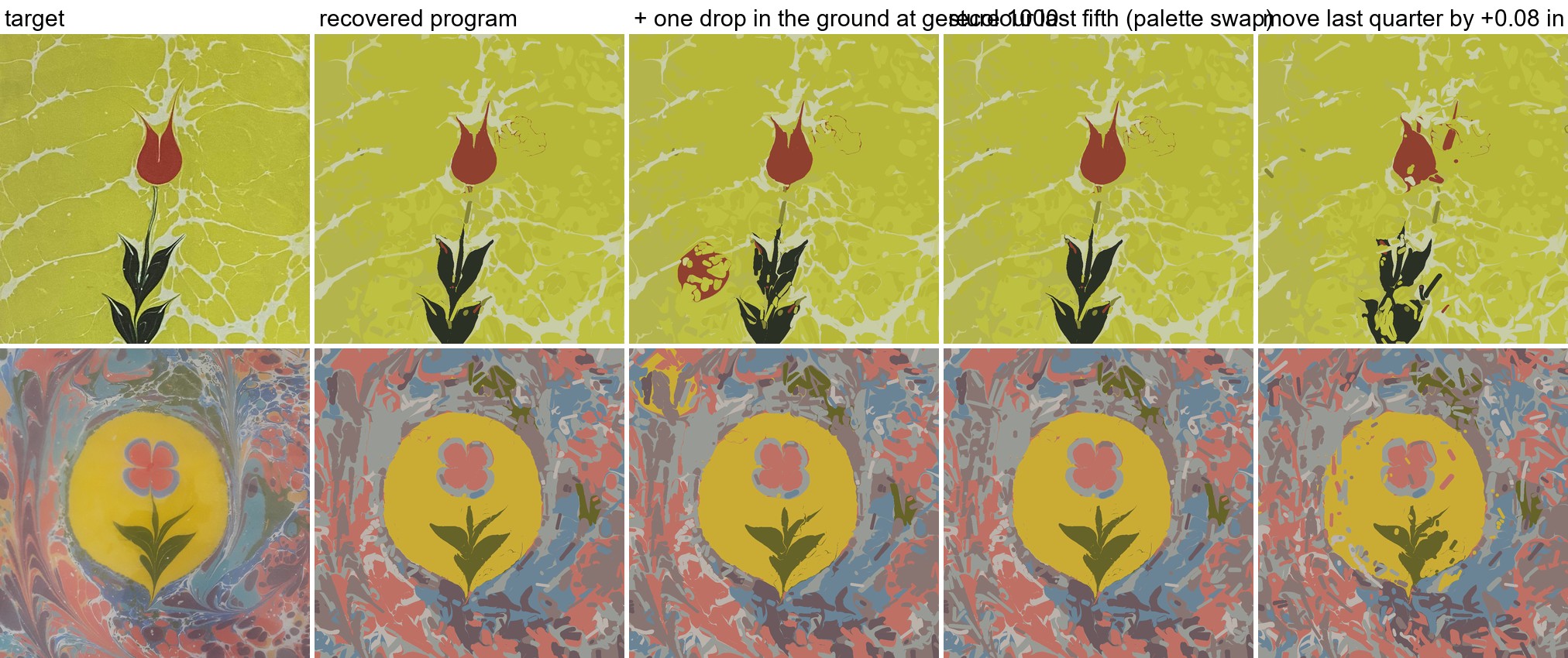}
\caption{\textbf{Program-level edits, replayed.} Lale and flower: target,
recovered program, and three edits of that program rendered by the same hard
renderer. Column three inserts a single drop of radius $0.06$ into the
ground beside the motif, at canvas position $(0.30,0.72)$ for lale and
$(0.20,0.20)$ for flower, in a palette colour taken from the motif, at
stroke $1000$ of $2000$; column four swaps the two most-used palette entries among the last
fifth of the strokes; column five translates the last quarter of the
strokes by $+0.08$ canvas units in $x$. Nothing is re-optimised.}\label{fig:interv}
\end{figure*}

\paragraph{Resolution.} Strokes are specified in canvas coordinates, so a
recovered program is a continuous-domain object and the raster resolution used for fitting should not be tied to it. Table~\ref{tab:resolution} tests that
directly: each program is optimised once at $1024^2$ and then replayed
unchanged at $512^2$ and $2048^2$, scored against the reference resampled to
the same grid. Replaying at $2048^2$ moves the error by at most $0.23\%$;
replaying at $512^2$ changes it by at most $4\%$. Above the $512^2$ preprocessing scale the target is an upsample, so those tests assess replay of the
representation, not recovery of new detail.

The converse does not hold for
fitting: programs fitted at $512^2$ and replayed at $1024^2$ reach
$3.13\times$ the floor against $3.67\times$ for programs fitted at $1024^2$
(per reference $-3$ to $-31\%$), at about a seventh of the optimisation time, because the
width floor and the relaxation are defined in pixels. A coarse-to-fine
schedule recovers most of that loss: fitting at $512^2$ for $200$ steps with
a free radius window and then continuing the same program at $1024^2$ for
$40$ steps, with the relaxation started near its floor, reaches $3.75\times$
in geometric mean against $3.67\times$ for the headline fit, in $37$\,s of
optimisation time against $132$\,s, one run per reference. The spread is wide --- flower improves to
$4.44\times$, lale falls from $5.22$ to $3.97\times$ because the coarse stage assigns a wrong colour to the flower head of the lale motif, which forty fine steps do not undo --- and relative to a $1024^2$ fit with the same free radius ($3.98\times$) the schedule loses $6\%$. Table~\ref{tab:speed} gives the times and the per-reference errors.

\begin{table*}[t]\centering\scriptsize\setlength{\tabcolsep}{3pt}
\begin{tabular}{l rr rrrrr r r}\toprule
& \multicolumn{2}{c}{time per fit (s)} & \multicolumn{5}{c}{hard-render MSE $\times10^{3}$ at $1024^2$} & GM & GM\\
\cmidrule(lr){2-3}\cmidrule(lr){4-8}
configuration & steps & wall & flower & tulip & teardrops & battal & lale & MSE$\times10^3$ & floor$/$MSE\\
\midrule
headline: $1024^2$, 240 steps, windowed radius & 132 & 234 & 4.65 & 13.52 & 6.07 & 33.43 & 2.63 & 8.04 & 3.67$\times$\\
free radius: $1024^2$, 240 steps & 132 & 234 & 4.69 & 12.81 & 5.31 & 29.95 & 2.34 & 7.41 & 3.98$\times$\\
$512^2$ only: 240 steps, windowed radius & 18 & --- & 5.74 & 15.19 & 6.55 & 34.32 & 3.82 & 9.44 & 3.13$\times$\\
stage A: $512^2$, 200 steps, free radius & 15 & 45 & 5.25 & 14.83 & 5.68 & 32.05 & 3.54 & 8.71 & 3.39$\times$\\
stage B: A, then 40 steps at $1024^2$ & 37 & 91 & 3.40 & 14.28 & 5.63 & 32.05 & 3.46 & 7.88 & 3.75$\times$\\
\bottomrule
\end{tabular}
\caption{Speed against error. \emph{Steps} is the optimisation time alone,
from the uncontended per-step measurements ($0.55$\,s at $1024^2$,
$0.075$\,s at $512^2$, $N=2000$). \emph{Wall} is a whole fit including start-up and the export render, which
differ by row; the step column is the fair comparison.
MSE is pixel mean-squared error of the exported program under the reference
renderer against the target at $1024^2$; \emph{floor$/$MSE} is the error of
the best single flat colour divided by the program's error, so $3.67\times$
means the program's error is $3.67$ times smaller than painting one colour.
One run per cell. The coarse-to-fine schedule (stage B) matches the headline's error in
$37$\,s of optimisation against $132$\,s, with a wide per-reference
spread: flower improves and lale loses, because the coarse stage commits to a
wrong colour for the flower head of the lale motif.}\label{tab:speed}
\end{table*}

\begin{table*}[t]\centering\footnotesize\setlength{\tabcolsep}{5pt}
\begin{tabular}{l rrr r rr rr r}\toprule
& \multicolumn{4}{c}{replayed at three resolutions (pixel MSE)}
& \multicolumn{2}{c}{PSNR (dB)} & \multicolumn{2}{c}{SSIM} & \\
\cmidrule(lr){2-5}\cmidrule(lr){6-7}\cmidrule(lr){8-9}
reference & $512^2$ & $1024^2$ & $2048^2$ & $1024\!\to\!2048$
& floor & $\Pi_A$ & floor & $\Pi_A$ & floor$/$MSE\\
\midrule
flower & 0.004663 & 0.004649 & 0.004651 & +0.03\% & 18.21 & 23.33 & 0.669 & 0.658 & 3.25$\times$\\
tulip & 0.013721 & 0.013519 & 0.013517 & -0.02\% & 14.06 & 18.69 & 0.377 & 0.490 & 2.90$\times$\\
teardrops & 0.006311 & 0.006070 & 0.006068 & -0.03\% & 14.99 & 22.17 & 0.488 & 0.539 & 5.22$\times$\\
battal & 0.034035 & 0.033431 & 0.033457 & +0.08\% & 10.63 & 14.76 & 0.269 & 0.400 & 2.59$\times$\\
lale & 0.002634 & 0.002633 & 0.002627 & -0.22\% & 18.62 & 25.80 & 0.714 & 0.751 & 5.22$\times$\\
\bottomrule
\end{tabular}
\caption{The recovered program is a continuous-domain object fitted to raster
samples. Each program is optimised once at $1024^2$ and then replayed
unchanged on a $512^2$, $1024^2$ and $2048^2$ grid, scored against the
reference resampled to the same grid. Replaying at $2048^2$ changes the error
by at most $0.23\%$ on any reference (median $0.04\%$); replaying at
$512^2$, the preprocessing scale, by at most $4\%$, comparable to the seed spread of Table~\ref{tab:noise}: the
resolution the program was fitted at is not baked into it. PSNR and SSIM are
reported alongside the same metrics for the best constant image as a
reference point: battal's $14.76$\,dB is
$+4.13$\,dB over its own floor. The PSNR column is a monotone re-expression of the same floor ratio
($+4.13$\,dB is $10\log_{10}2.59$), not an independent metric. In SSIM the recovered program exceeds the constant image on four references
and falls slightly below it on flower.}\label{tab:resolution}
\end{table*}

\paragraph{Spatial distribution of the residual.} Figure~\ref{fig:errmaps} shows target,
replay and amplified absolute error for the median and the hardest reference.
The residual is not spread evenly: it concentrates on the finest structures,
where the recovered stroke widths approach the width floor, and along the
boundaries of large colour regions. This is the same limit the stroke-count
series reaches (Sec.~\ref{sec:abl}) --- additional strokes capture finer structure until the improvement falls below the noise floor --- and it is why battal, an
all-over pattern of small features, is the reference the method reconstructs
least well.

\begin{figure}[!htb]\centering
\includegraphics[width=\columnwidth]{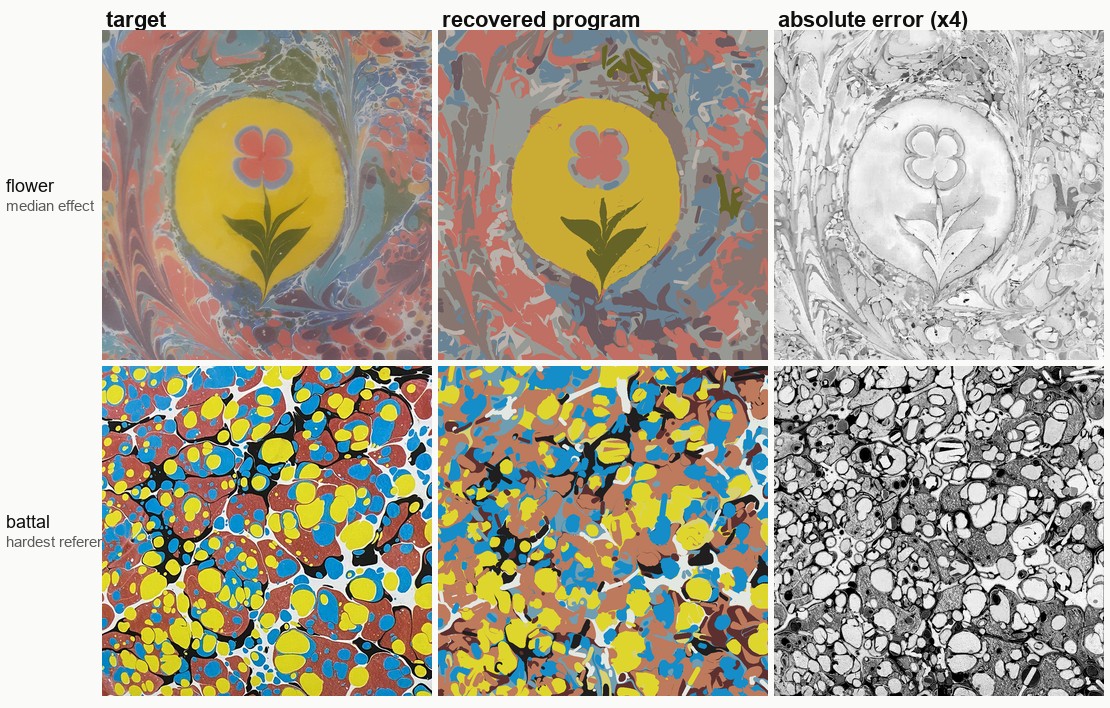}
\caption{\textbf{Where the residual is.} Target, replayed program, and
absolute error amplified $4\times$ (dark is larger error), at $1024^2$. Error
concentrates on structures near the stroke-width floor rather than being
distributed over the canvas.}\label{fig:errmaps}
\end{figure}

\paragraph{Containment.} Reparameterising each anchor by the current width and
offset can confine every footprint to a centred window covering $64\%$ of the
canvas. The constraint holds for every stroke in all five programs and changes
hard-render MSE by a median $+6.5\%$ (per reference $+3.7\%$ to $+15.0\%$).
That median is comparable to the across-seed spread of
Table~\ref{tab:noise}; flower has the largest increase, $+15.0\%$. It constrains where the tool
may act, not where transport later carries the pigment: the constrained programs still retain $78$--$79\%$ of their painted area outside the
canvas after transport, as the unconstrained ones do.

\paragraph{Area budget versus stroke count.}
Deposited area is a resource, and the area law makes it a controllable one:
the solver carries a penalty $\lambda_A\sum_i A_{\ell_i}(r_i)$ with the exact
gradient $\partial_r A_\ell=P_\ell$. Sweeping $\lambda_A$ from $0$ to $1$ on
all five references cuts deposited area by $83\%$, from $6.8$ to $1.2$ canvas
areas, at a $5\%$ reduction in the geometric-mean floor ratio, with
per-reference changes from $-13\%$ to $+5\%$. What the penalty does \emph{not}
do is remove a stroke: at $\lambda_A=1$ every one of the $2000$ still covers
more than a pixel and none sits at the radius floor. The loss is temporal
instead --- the fraction of strokes still visible rises from $6\%$ in the
first decile of the sequence to $99\%$ in the last, and $83\%$ of all
deposited paint never reaches the image. Most of it is not covered but
expelled: footprints are laid almost entirely on the canvas ($2\%$ outside at
deposition), and it is the later insertions that carry the early paint off it.

The budget that removes strokes is the count $N$, and
Table~\ref{tab:dropn} re-scores the archived count sweep to show what it
trades. As $N$ falls from $4000$ to $125$ the visible fraction rises
monotonically from $45\%$ to $100\%$, while the floor ratio falls from
$1.75\times$ to $0.91\times$: a $125$-stroke program is \emph{worse} than a
flat colour on these references. At $N=500$, $90\%$ of strokes are visible at $1.39\times$ the floor; at $N=2000$, $60\%$ at $1.61\times$. Fidelity and visibility trade off along this axis, and the operating
point of the paper sits on the fidelity side of it. Both budgets count \emph{deposited} area, what a physical bath consumes; delivered area on the canvas is a smaller quantity.

\begin{table}[t]\centering\footnotesize\setlength{\tabcolsep}{3.5pt}
\begin{tabular}{r rr rr rr}\toprule
& \multicolumn{2}{c}{deposited area} & \multicolumn{2}{c}{floor$/$MSE} & \multicolumn{2}{c}{visible strokes}\\
\cmidrule(lr){2-3}\cmidrule(lr){4-5}\cmidrule(lr){6-7}
$\lambda_A$ & GM (canvas) & vs.\ $0$ & GM & vs.\ $0$ & GM & vs.\ $0$\\
\midrule
0.00 & 6.76 & +0\% & 1.75$\times$ & +0\% & 1215 & +0\\
0.03 & 2.50 & -63\% & 1.74$\times$ & -0\% & 1217 & +2\\
0.30 & 1.51 & -78\% & 1.69$\times$ & -3\% & 1267 & +52\\
1.00 & 1.16 & -83\% & 1.66$\times$ & -5\% & 1353 & +138\\
\bottomrule
\end{tabular}
\caption{Paint budget as a penalty $\lambda_A\sum_i A_{\ell_i}(r_i)$ on
deposited area, geometric means over the five references. This series was
trained under the deep-feature objective with a free radius window and no
length penalty, so it is read within itself, like Sec.~\ref{sec:abl}; scores
are pixel MSE on the reference renderer. At $\lambda_A=1.0$ deposited area falls $83\%$ while the floor ratio changes $-5\%$ in geometric mean,
with nonuniform changes across references; per-reference changes
range from $-13\%$ (teardrops) to $+5\%$. The headline seed spread does
not establish significance for this separately trained series. \emph{Visible strokes} counts those owning at least
$50$ pixels of the final image: unchanged at $\lambda_A=0.03$ and rising on
every reference from $\lambda_A=0.3$, by $+83$ to $+184$ per program at
$\lambda_A=1.0$.}\label{tab:budget}
\end{table}

\begin{table}[t]\centering\footnotesize\setlength{\tabcolsep}{2.5pt}
\begin{tabular}{r r r r r r}\toprule
$N$ & area & floor$/$MSE & $\Delta$ & visible & vis.$/N$\\
\midrule
125 & 0.76 & 0.91$\times$ &  & 124 & 100\%\\
250 & 1.13 & 1.12$\times$ & +23\% & 242 & 97\%\\
500 & 2.12 & 1.39$\times$ & +24\% & 452 & 90\%\\
1000 & 3.49 & 1.51$\times$ & +9\% & 779 & 78\%\\
2000 & 5.77 & 1.61$\times$ & +7\% & 1218 & 61\%\\
4000 & 10.44 & 1.75$\times$ & +8\% & 1788 & 45\%\\
\bottomrule
\end{tabular}
\caption{The stroke count as a budget, from the archived count sweep, restricted to the two
references of that sweep still in the corpus (flower and battal) (deep-feature objective; a separate series, read
within itself), re-scored on the reference renderer at $1024^2$. Unlike the
area penalty of Table~\ref{tab:budget}, which shrinks every stroke and
removes none, lowering $N$ removes strokes outright. \emph{Visible} counts
strokes owning at least $50$ pixels of the final image and
\emph{visible$/N$} is the fraction of the program that can still be matched
to the image: it runs from
$100\%$ at $N=125$ to $45\%$ at $N=4000$, while the visible \emph{count}
runs from $124$ to $1788$. At $N=125$ the program is \emph{worse} than a flat
colour ($0.91\times$). Fewer strokes means a larger share of the program
survives to the image and a smaller absolute number does.}\label{tab:dropn}
\end{table}

\paragraph{Penalising non-visible pigment.} The
area penalty charges every stroke, but the paint that never reaches the image
is a specific subset, and the compositing identity $\sum_i v_i=\sum_x(1-T_0)$,
with $v_i=\sum_x T_i\alpha_i$ the area stroke $i$ owns in the image, shows that its complement, the visible area, can be computed with one additional render using unit colours. We
therefore added $\lambda_O\big(\sum_i A^\tau_i-\text{visible area}\big)$ and
re-fit all five references with a free radius window
(Table~\ref{tab:occ}). In this regime the term does not behave as intended: the canvas is painted almost entirely in every program, so shrinking a
stroke almost never exposes background, and the penalty acts approximately as
a total-area penalty with an offset. At $\lambda_O=1$ the deposited total
falls from $5.8$ to $1.3$ canvas areas and the unseen share from $83\%$ to
$24\%$, while the visible fraction of strokes rises from $59\%$ to $67\%$.
The free-radius control alone improves the geometric-mean floor ratio from
$3.67\times$ to $3.98\times$, so the headline radius window restricts
attainable fidelity; against that control the aggregate change stays within
$1\%$, with teardrops losing $12.9\%$ at $\lambda_O=1$. No setting removes a
stroke, and replacing the fixed count by an error target ($1.05\times$ the
headline error) with an adaptive multiplier on the data term holds the
requested quality ($3.70\times$) and also removes no strokes. Only the stroke count $N$ reduces the number of strokes.

\begin{table}[t]\centering\scriptsize\setlength{\tabcolsep}{2.5pt}
\begin{tabular}{l rr rr r}\toprule
arm & floor$/$MSE & vs.\ headline & unseen & visible & $N_{\mathrm{eff}}$\\
\midrule
control, $\lambda_O=0$ & 3.98$\times$ & +8.4\% & 83\% & 59\% & 2000\\
$\lambda_O=0.1$ & 4.01$\times$ & +9.3\% & 46\% & 60\% & 2000\\
$\lambda_O=0.3$ & 3.95$\times$ & +7.6\% & 34\% & 62\% & 2000\\
$\lambda_O=1$ & 3.95$\times$ & +7.6\% & 24\% & 67\% & 2000\\
error target, $\lambda_O=0.3$ & 3.70$\times$ & +0.8\% & 40\% & 62\% & 2000\\
\bottomrule
\end{tabular}
\caption{Penalising deposited area not visible in the final canvas. All five references, headline
objective and step budget, free radius window, one run per cell, scored on
the reference renderer at $1024^2$; geometric means and plain means over the five. \emph{Unseen} is the fraction of deposited paint that never reaches the
image, whether pushed off the canvas by later insertions or covered on it
(footprints are laid almost entirely on the canvas, $2\%$ outside at
deposition; the archived off-canvas measurement of Sec.~\ref{sec:process}
shows that most of the unseen paint is expelled rather than covered);
\emph{visible} the fraction of strokes owning at least $50$ pixels;
$N_{\mathrm{eff}}$ the smallest number of strokes exceeding one pixel of area
in any program. The free-radius control improves on the headline in these single-run
comparisons; the headline used the radius window of Table~\ref{tab:hyper}. With the canvas painted almost entirely (visible area
$0.97$--$0.99$ in every arm) the visible term varies little and the penalty acts approximately as a
total-area penalty. Against the control, aggregate image-error changes are
small, with a larger loss on teardrops at $\lambda_O=1$. No tested setting
removes a stroke; these are descriptive comparisons.}\label{tab:occ}
\end{table}

\section{Ablations}\label{sec:abl}
The stroke-count series uses two references and a fixed deep-feature data
term; the stroke-type comparison uses all five. Each comparison is internal
to a matched series, and all values are hard-rendered.

\paragraph{Stroke count.} Table~\ref{tab:count}: reconstruction improves
monotonically in $N$ on both references with overall diminishing returns, each doubling yielding improvements of $5.3$, $6.1$, $5.3$, $4.6$ and $3.4\%$. This series is
scored in its own deep-feature units, so the comparable noise figure is the
across-seed spread measured in those units, $0.5$--$0.6\%$, not the pixel-MSE
band of Table~\ref{tab:noise}; every doubling listed is outside it. Returns
diminish but remain resolvable at the last step, so $N=2000$ is an operating
point chosen for cost rather than a measured ceiling.

\paragraph{Continuous stroke type.} At $\ell_i=0$ the operator reduces exactly
to the classical drop map \cite{Lu2012}, so fixing $\ell_i=0$ for every stroke
is a fit under the drop dynamics alone. It increases loss
on all five references by a median $6.2\%$ on that series' own deep-feature
objective, against an across-seed spread of $0.5$--$0.6\%$ in the same units,
so both the sign and the magnitude are resolvable. Re-fitted under the
headline settings themselves (pixel-MSE objective, same seed and budget;
supplement), fixing $\ell_i=0$ raises pixel MSE on 4/5 references by a
geometric mean of $+7.8\%$ ($-1.7$ to $+15.0\%$, against a re-run band of
$2.7$--$6.7\%$) and LPIPS-Alex on 5/5 by $+9.5$ to $+25.0\%$; the elongated
deposits --- stems, leaves, the tails of the teardrops --- become chains of
drops (supplement), an expressiveness cost that the perceptual metric registers on
every reference and the pixel metric only partly. In
the recovered programs, $78\%$ of strokes have $\ell_i\ge0.004$ (four pixels
at $1024^2$) and $59\%$ have $\ell_i>r_i$, so the optimiser makes substantive
use of the continuous drop-to-line degree of freedom. The comparison is between stroke \emph{shapes} within our vocabulary; the operator set of \cite{Lu2012} also includes zero-injection tine and stylus maps, which lie outside it.

\begin{table}[t]\centering\footnotesize\setlength{\tabcolsep}{3.5pt}
\begin{tabular}{r rr r r}\toprule
$N$ & flower & battal & mean & per doubling\\
\midrule
125 & 5.1723 & 5.9314 & 5.552 & \\
250 & 4.9600 & 5.5528 & 5.256 & -5.3\%\\
500 & 4.7266 & 5.1401 & 4.933 & -6.1\%\\
1000 & 4.5795 & 4.7669 & 4.673 & -5.3\%\\
2000 & 4.3969 & 4.5163 & 4.457 & -4.6\%\\
4000 & 4.3113 & 4.2997 & 4.306 & -3.4\%\\
\bottomrule
\end{tabular}
\caption{Reconstruction versus stroke count under the deep-feature training
objective of that series (lower is better; the two development references in the corpus).
Monotone with diminishing returns; the last doubling buys $3.4\%$, which is
still outside the $0.5$--$0.6\%$ across-seed spread measured in these same
deep-feature units.}\label{tab:count}
\end{table}

\section{Generality of the construction}\label{sec:transfers}
The marbling-specific components of this construction can be separated from the generic ones.

\paragraph{Marbling-specific.} Only the operator itself: the choice to foliate
by distance to the segment, the Steiner area law \eqref{eq:steiner}, and the
symmetry argument that fixes the tangential coordinate. Another medium would
need its own $\psi$, and Sec.~\ref{sec:fidelity} is the template for quantifying the discrepancy between that $\psi$ and a
reference transport model.

\paragraph{Generic.} The distinction in Definition~\ref{def:aware} identifies
whether later strokes alter the coordinates at which earlier coverage is
evaluated. This makes image fitting transport-dependent, but does not require
inversion for differentiation: ordinary automatic differentiation and
checkpointing remain alternatives.

Replay applies when intermediate states can be reconstructed with computable
inverses and derivatives away from explicitly detectable exceptional sets.
The non-reconstructible state must be retained, and numerical conditioning may
require checkpoints. Closed-form inversion is useful but not necessary (supplement, S6); the memory benefit depends on the actual record
count and checkpoint schedule. Extending this approach to another wet medium
requires a suitable state representation and transport law. Diffusion,
mixing and pigment-dependent dynamics introduce additional state loss and
ill-conditioning, and establishing replay efficiency there is a separate
problem.

\section{Limitations}\label{sec:limits}
\paragraph{Scope of the evidence.} The evaluation contains five references
from one medium and one primary optimisation run per cell. Two of them were
used during development. The large paired effects reproduce on the three
held-out images and exceed measured numerical variation, but a larger,
multi-seed corpus is needed to estimate performance across marbling schools
and acquisition conditions. The comparison set is mostly internal: the
inert-mark arm, the one-pass correction and the second transport operator each
isolate one mechanism of the model; the one external system, a stroke-based fitter,
optimises inert marks and is scored only as a raster; a shared task on which both kinds
of system execute, and execution on a physical bath, are future work.

\paragraph{Model.} The operator is a geometric surrogate: exactly
area-preserving off the footprint, but with the foliation and tangential
transport of Sec.~\ref{sec:model} rather than those of a harmonic potential,
with a $27\%$ discrepancy from line-source potential flow at the median aspect ratio under \eqref{eq:D} ($8\%$ for the calibrated variant of Sec.~\ref{sec:calibrated}). It uses a free plane with no tray wall, passive pigment,
quasi-static sequential strokes, and no viscosity contrast, fingering,
interfacial tension, or diffusion. The recovered programs inject
$5.5$--$6.2$ canvas areas, much of which leaves the visible window; direct
physical execution therefore requires a bounded-domain model and a calibrated
paint budget. Zero-injection stylus drag is also outside the present stroke
family. The transfer experiment supports variation within the insertion class and
says nothing about arbitrary fluid dynamics.

\paragraph{Depth.} Measured step time grows approximately linearly with
program length (Sec.~\ref{sec:adjoint}), and replay memory includes
$P\lceil N/k\rceil$ checkpoints and $S$ exception records. We have not run
$N=10^4$. At fixed optimisation steps, increasing the number of parameters
can also change convergence. Table~\ref{tab:count} shows diminishing gains
through $N=4000$; it does not identify the limiting resource at greater depth.

\paragraph{Zero-injection transport.} Tine and stylus drag
\cite{Lu2012,Jaffer2018} is a different displacement class with no
deposition set. We report no calibrated comparison against it; the analytical brushes of \cite{James2026} are a published candidate evaluator for that class.

\paragraph{Identifiability.} Program parameters have ambiguities, including
palette-label permutations. Strokes below the ownership threshold can still
influence the final image through transport, so low visibility does not imply
absence of information about them. Our synthetic matching experiment assesses
one recovery procedure and positional statistic. We seek a program attaining low reconstruction loss under the model, not the historical process that made a sheet. Stroke count
and ordered slots are fixed; discrete program search and physical calibration
remain separate problems.

\section{Conclusion}
We presented a transport-coupled stroke design and the inverse-graphics
solution under it: the recovery of executable programs of such strokes from
images, for a digital marbling model. A capsule primitive joins drops to
elongated deposits, with exactly area-preserving exterior transport and a
closed-form exterior inverse. Its replay adjoint regenerates intermediate
states while retaining exceptional coordinates and position checkpoints.
At the reported operating point, the PyTorch implementation uses
$8.7\times$ less memory than the tested checkpointed-autograd baseline at
comparable step time; the fused implementation fits $2000$-stroke programs
in minutes. The programs replay at multiple resolutions and support edits
under the same dynamics. Transport-aware fitting outperforms the tested
transport-disabled and one-pass compensation controls when replayed under
the insertion model and, scored as a raster, is level with a published
stroke-based fitter at matched stroke count. Alternative-transport and synthetic experiments delimit
operator sensitivity and generating-stroke correspondence. The result is an
editable computational process for a medium in which later actions move
earlier marks.

\paragraph{Availability.} The recovered programs, the recorded scores from
which every table is generated, the generating scripts, the discrepancy
computation of Sec.~\ref{sec:fidelity} and the reference implementation of
the forward model are archived and available from the corresponding author
on request. The source package includes the supplementary PDF and the
table-generation script; regeneration of the measurements requires the
separate results archive. A public code-and-data release accompanies acceptance; during review the
archive is available to the reviewers through the editor.

% Bibliography.


\begin{thebibliography}{99}\small

% ---------- Stroke-based rendering: classical ----------

\bibitem{Hertzmann2003} A.~Hertzmann. A Survey of Stroke-Based Rendering.
  \emph{IEEE Computer Graphics and Applications}, 23(4):70--81, 2003.

% ---------- Stroke-based rendering: neural / differentiable ----------

\bibitem{Huang2019} Z.~Huang, W.~Heng, S.~Zhou. Learning to Paint with
  Model-based Deep Reinforcement Learning. In \emph{Proc. ICCV}, 2019.
\bibitem{Li2020} T.-M.~Li, M.~Luk\'a\v{c}, M.~Gharbi, J.~Ragan-Kelley.
  Differentiable Vector Graphics Rasterization for Editing and Learning.
  \emph{ACM Transactions on Graphics (Proc. SIGGRAPH Asia)}, 39(6), 2020.
\bibitem{Zou2021} Z.~Zou, T.~Shi, S.~Qiu, Y.~Yuan, Z.~Shi. Stylized Neural
  Painting. In \emph{Proc. CVPR}, 2021.
\bibitem{Liu2021} S.~Liu, T.~Lin, D.~He, F.~Li, R.~Deng, X.~Li, E.~Ding,
  H.~Wang. Paint Transformer: Feed Forward Neural Painting with Stroke
  Prediction. In \emph{Proc. ICCV}, 2021.

% ---------- Feedback-based medium-awareness (robot painting) ----------

\bibitem{Schaldenbrand2023} P.~Schaldenbrand, J.~McCann, J.~Oh. FRIDA: A
  Collaborative Robot Painter with a Differentiable, Real2Sim2Real Planning
  Environment. In \emph{Proc. ICRA}, 2023.

% ---------- Physically based paint and ink media (placed paint moves) ----------
\bibitem{Curtis1997} C.~J.~Curtis, S.~E.~Anderson, J.~E.~Seims,
  K.~W.~Fleischer, D.~H.~Salesin. Computer-Generated Watercolor. In
  \emph{Proc. SIGGRAPH}, 1997.

% ---------- Marbling: mathematical and physical ----------
\bibitem{Lu2012} S.~Lu, A.~Jaffer, X.~Jin, H.~Zhao, X.~Mao. Mathematical
  Marbling. \emph{IEEE Computer Graphics and Applications}, 32(6):26--35,
  2012.
\bibitem{Acar2006} R.~Acar, P.~Boulanger. Digital Marbling: A Multiscale
  Fluid Model. \emph{IEEE Transactions on Visualization and Computer
  Graphics}, 12(4):600--614, 2006.

\bibitem{Jaffer2018} A.~Jaffer. Oseen Flow in Paint Marbling.
  arXiv preprint 1702.02106, 2017.

% ---------- Inverse design through fluid dynamics ----------

% ---------- Differentiable rendering: relaxation and adjoints ----------
\bibitem{LiuSoftRas2019} S.~Liu, T.~Li, W.~Chen, H.~Li. Soft Rasterizer: A
  Differentiable Renderer for Image-based 3D Reasoning. In \emph{Proc. ICCV},
  2019.

\bibitem{Vicini2021} D.~Vicini, S.~Speierer, W.~Jakob. Path Replay
  Backpropagation: Differentiating Light Paths using Constant Memory and
  Linear Time. \emph{ACM Transactions on Graphics (Proc. SIGGRAPH)}, 40(4),
  2021.

% ---------- Memory-efficient differentiation ----------
\bibitem{Griewank2000} A.~Griewank, A.~Walther. Algorithm 799: Revolve: An
  Implementation of Checkpointing for the Reverse or Adjoint Mode of
  Computational Differentiation. \emph{ACM Transactions on Mathematical
  Software}, 26(1):19--45, 2000.

\bibitem{Gomez2017} A.~N.~Gomez, M.~Ren, R.~Urtasun, R.~B.~Grosse. The
  Reversible Residual Network: Backpropagation Without Storing Activations.
  In \emph{Proc. NeurIPS}, 2017.

% ---------- Recent stroke reconstruction (concurrent / post-2023) ----------
\bibitem{Jiang2025} Y.~Jiang, J.~Lu, Y.~Chen, Y.~He, K.~Wu, Y.~Yang,
  C.~Jiang. Birth of a Painting: Differentiable Brushstroke Reconstruction.
  arXiv:2511.13191, 2025.

\bibitem{LiAdjoint2025} Z.~Li, J.~He, B.~B\"orcs\"ok, T.~Zhang, D.~Chen,
  T.~Du, M.~C.~Lin, G.~Turk, B.~Zhu. An Adjoint Method for Differentiable
  Fluid Simulation on Flow Maps. In \emph{Proc. ACM SIGGRAPH Asia}, 2025.
\bibitem{Shu2026} L.~Shu, Y.~Jiang, K.~Wu, Y.~Yang, L.~Guibas, C.~Jiang.
  Differentiate the Solver, Not the Equation: Reverse-Sweep Adjoints for
  Block Implicit Simulation. arXiv:2608.08559, 2026.

% ---------- Metric ----------
\bibitem{Zhang2018} R.~Zhang, P.~Isola, A.~A.~Efros, E.~Shechtman, O.~Wang.
  The Unreasonable Effectiveness of Deep Features as a Perceptual Metric. In
  \emph{Proc. CVPR}, 2018.

% Compositing and differentiable rasterisation provenance for Eqs. (3) and (9)
\bibitem{Porter1984} T.~Porter, T.~Duff. Compositing Digital Images. In
  \emph{Proc. SIGGRAPH}, 1984.

% Reverse-time state reconstruction and its failure for non-reversible dynamics

\bibitem{Maclaurin2015} D.~Maclaurin, D.~Duvenaud, R.~P.~Adams.
  Gradient-based Hyperparameter Optimization through Reversible Learning. In
  \emph{Proc. ICML}, 2015.
% Adjoints through simulation chains in graphics
\bibitem{McNamara2004} A.~McNamara, A.~Treuille, Z.~Popovi\'c, J.~Stam. Fluid
  Control Using the Adjoint Method. \emph{ACM Transactions on Graphics (Proc.
  SIGGRAPH)}, 23(3), 2004.
\bibitem{Hu2020} Y.~Hu, L.~Anderson, T.-M.~Li, Q.~Sun, N.~Carr,
  J.~Ragan-Kelley, F.~Durand. DiffTaichi: Differentiable Programming for
  Physical Simulation. In \emph{Proc. ICLR}, 2020.

% ---------- Analytical fluid brushes ----------
\bibitem{James2026} D.~L.~James, E.~James. Mixwell: Sharp 2D Fluid Brushes
  for Progressive Physics-Based Mixing. \emph{ACM Transactions on Graphics
  (Proc. SIGGRAPH)}, 45(4), 2026.

% ---------- Process recovery from a finished artwork ----------
\bibitem{Fu2011} H.~Fu, S.~Zhou, L.~Liu, N.~J.~Mitra. Animated Construction
  of Line Drawings. \emph{ACM Transactions on Graphics (Proc. SIGGRAPH
  Asia)}, 30(6):133, 2011.
\bibitem{Tan2015} J.~Tan, M.~Dvoro\v{z}\v{n}\'ak, D.~S\'ykora, Y.~Gingold.
  Decomposing Time-Lapse Paintings into Layers. \emph{ACM Transactions on
  Graphics (Proc. SIGGRAPH)}, 34(4):61, 2015.

\bibitem{Ma2022} X.~Ma, Y.~Zhou, X.~Xu, B.~Sun, V.~Filev, N.~Orlov, Y.~Fu,
  H.~Shi. Towards Layer-wise Image Vectorization. In \emph{Proc. CVPR}, 2022.

% ---------- Geometry of parallel sets ----------
\bibitem{Federer1959} H.~Federer. Curvature Measures. \emph{Transactions of
  the American Mathematical Society}, 93(3):418--491, 1959.

% ---------- Image-guided mark placement: origins ----------
\bibitem{Haeberli1990} P.~Haeberli. Paint by Numbers: Abstract Image
  Representations. In \emph{Proc. SIGGRAPH}, 1990.
\bibitem{Hertzmann1998} A.~Hertzmann. Painterly Rendering with Curved Brush
  Strokes of Multiple Sizes. In \emph{Proc. SIGGRAPH}, 1998.

% ---------- Gradients across visibility discontinuities ----------
\bibitem{LiEdge2018} T.-M.~Li, M.~Aittala, F.~Durand, J.~Lehtinen.
  Differentiable Monte Carlo Ray Tracing through Edge Sampling. \emph{ACM
  Transactions on Graphics (Proc. SIGGRAPH Asia)}, 37(6), 2018.
\bibitem{Loubet2019} G.~Loubet, N.~Holzschuch, W.~Jakob. Reparameterizing
  Discontinuous Integrands for Differentiable Rendering. \emph{ACM Transactions
  on Graphics (Proc. SIGGRAPH Asia)}, 38(6), 2019.

% ---------- Programs recovered from images ----------
\bibitem{Ellis2018} K.~Ellis, D.~Ritchie, A.~Solar-Lezama, J.~B.~Tenenbaum.
  Learning to Infer Graphics Programs from Hand-Drawn Images. In \emph{Proc.
  NeurIPS}, 2018.
\bibitem{Talton2011} J.~O.~Talton, Y.~Lou, S.~Lesser, J.~Duke, R.~M\v{e}ch,
  V.~Koltun. Metropolis Procedural Modeling. \emph{ACM Transactions on
  Graphics}, 30(2), 2011.

% ---------- The medium ----------
\bibitem{Wolfe1990} R.~J.~Wolfe. \emph{Marbled Paper: Its History, Techniques,
  and Patterns}. University of Pennsylvania Press, 1990.

\bibitem{Dinh2015} L.~Dinh, D.~Krueger, Y.~Bengio. NICE: Non-linear Independent
  Components Estimation. In \emph{ICLR Workshop Track}, 2015. arXiv:1410.8516.
\end{thebibliography}
\end{document}